\algrenewcommand{\algorithmicrequire}{\textbf{Input:}}
\algrenewcommand{\algorithmicensure}{\textbf{Output:}}
\newtheorem{theorem}{Theorem}
\newtheorem{remark}{Remark}
\newtheorem{lemma}{Lemma}
\begin{document}

\title{
Fluid Antenna System-Enabled UAV Communications in the Finite Blocklength Regime
}

\author{Xusheng Zhu,
            Kai-Kit Wong, \IEEEmembership{Fellow,~IEEE},
            Hanjiang Hong,
            Han Xiao,\\
            Hao Xu,
            Tuo Wu, and
            Chan-Byoung Chae~\IEEEmembership{Fellow,~IEEE}
\vspace{-7mm}

\thanks{(\emph{Corresponding author: Kai-Kit Wong}).}
\thanks{The work of X. Zhu and K. K. Wong was in part supported by the Engineering and Physical Sciences Research Council (EPSRC) under grant EP/W026813/1.}
\thanks{The work of H. Hong was supported by the Outstanding Doctoral Graduates Development Scholarship of Shanghai Jiao Tong University.}
\thanks{The work of C.-B. Chae was in part supported by the Institute for Information and Communication Technology Planning and Evaluation (IITP)/NRF grant funded by the Ministry of Science and ICT (MSIT), South Korea, under Grant RS-2024-00428780 and 2022R1A5A1027646.}

\thanks{X. Zhu, K. K. Wong, and H. Hong are with the Department of Electronic and Electrical Engineering, University College London, London, United Kingdom. (e-mail: \{xusheng.zhu;kai-kit.wong;hanjiang.hong\}@ucl.ac.uk). K. K. Wong is also affiliated with the Yonsei Frontier Lab., Yonsei University, Seoul 03722, Korea.}
\thanks{H. Xiao is School of Information and Communications Engineering, Xi'an Jiao Tong University, China (e-mail: hanxiaonuli@stu.xjtu.edu.cn).}
\thanks{H. Xu is with the National Mobile Communications Research Laboratory, Southeast University, Nanjing 210096, China (e-mail: hao.xu@seu.edu.cn).}
\thanks{T. Wu is with the City University of Hong Kong, Hong Kong (e-mail: tuowu2@cityu.edu.hk).}
\thanks{C.-B. Chae is with the School of Integrated Technology, Yonsei University, Seoul 03722, Republic of Korea (e-mail: cbchae@yonsei.ac.kr).}
}

\maketitle

\begin{abstract}
This paper develops a comprehensive framework for the performance analysis of fluid antenna system (FAS)-enabled unmanned aerial vehicle (UAV) relaying networks operating in the finite blocklength regime. Our contribution lies in establishing a rigorous methodology for characterizing system reliability under diverse propagation environments. Closed-form expressions for the block error rate (BLER) are derived by employing a tractable eigenvalue-based approximation of the spatially correlated UAV-to-user link, whose underlying independent diversity components are modeled as Nakagami-$m$ fading. This approach addresses both line-of-sight (LoS) dominant rural and probabilistic non-line-of-sight (NLoS) urban scenarios. Furthermore, a high signal-to-noise ratio (SNR) asymptotic analysis is developed, revealing the fundamental diversity order of the UAV-to-user link. Based on this, we further address the practical issue of energy efficiency. A realistic energy efficiency maximization problem is formulated, which explicitly accounts for the time and energy overhead inherent in the FAS port selection process, a factor often omitted in idealized models. An efficient hierarchical algorithm is then proposed to jointly optimize the key system parameters. Extensive numerical results validate the analysis and illustrate that while FASs can yield substantial power gains, the operational overhead introduces a non-trivial trade-off. This trade-off leads to an optimal number of ports and fundamentally different UAV deployment strategies in rural versus urban environments. This work provides both foundational analysis and practical design guidelines for FAS-enabled UAV communications.
\end{abstract}

\begin{IEEEkeywords}
Fluid antenna system (FAS), unmanned aerial vehicle (UAV), Nakagami-$m$ fading, rural and urban scenarios.
\end{IEEEkeywords}

\section{Introduction}
\IEEEPARstart{T}{he} evolution toward sixth-generation (6G) wireless networks is driven by the imperative to support a new wave of mission-critical services, including autonomous systems \cite{rahim-2022} and tactile internet \cite{Simsek-2016}, which are fundamentally underpinned by ultra-reliable low-latency communications (URLLC)~\cite{Popovski-2019}. A cornerstone for URLLC is the need of short-packet communication to meet stringent latency requirements, often below one millisecond~\cite{she2017radio}. Nevertheless, operating within this finite blocklength regime fundamentally alters the performance landscape, rendering classical Shannon capacity theorems, predicated on infinitely long codewords, an inadequate benchmark \cite{Durisi-2016}. Instead, system performance in URLLC is governed by information-theoretic limits that explicitly articulate the trade-off between coding rate, blocklength, and a non-negligible decoding error probability~\cite{polyanskiy2010channel}.

On the other hand, unmanned aerial vehicles (UAVs) have emerged as a pivotal technology for enhancing the flexibility, reliability, and coverage of 6G networks~\cite{wu2021overview,xiao2025ene,xiao2025star}. Deployed as aerial relays, UAVs can dynamically establish robust communication links, circumventing blockages and providing seamless connectivity where terrestrial infrastructure is compromised \cite{ren2020joint,yuan2022per,elw2023power,Liu-2024}. While the synergy between UAVs and URLLC has been an active area of research, a key challenge persists in providing reliable links to ground users, such as mobile handsets and Internet of Things (IoT) devices, which are subject to strict size, weight, and power (SWaP) constraints. In this case, achieving spatial diversity with conventional multiple-antenna technology is often impractical.

To tackle this, the fluid antenna system (FAS) concept has emerged as a transformative paradigm capable of harvesting spatial diversity within a compact form factor \cite{new2025a,Lu-2025,wu2024flu,New-2026jsac}. First proposed by Wong {\em et al.}~in \cite{wong2021fluid,wong2020pel}, FAS treats antenna as a reconfigurable physical-layer resource to broaden system design and network optimization. Unlike fixed-position antennas (FPAs), one feature of FAS is to be able to optimize antenna position with fine resolution for extraordinary spatial diversity \cite{new2024fga,New2024information,saixu2025fas}. Channel estimation for FAS systems has been addressed in \cite{xu2024channel,zhang2023successive}. In \cite{new2025channe}, the trade-off between system performance and channel estimation overhead in FAS was further analyzed. FAS also provides a new way to mitigate interference, leading to the concept of fluid antenna multiple access (FAMA) \cite{wong2022fama,wong2023ffast,wong2023sfast,hong2025down,wong2024compact,Waqar-2025}.

The potential of FAS to enhance link reliability and spectral efficiency by leveraging its unique spatial selectivity and hardware-efficient single-radio frequency (RF)-chain architecture, making it an attractive candidate for SWaP-constrained platforms \cite{zhu2025fas}. As such, a burgeoning body of research has begun to explore the integration of FAS with UAVs. For instance, the authors in \cite{cha2025uav} investigated a multiuser network with UAV as a relay, where the base station (BS) employs rate-splitting multiple access. In \cite{li2025radi}, the authors investigated interference resilience through radiation pattern control. Other pioneering works have explored FAS-assisted 3D UAV positioning \cite{xu2025fluid}, the synergy with reconfigurable intelligent surfaces (RIS) in UAV networks \cite{shen2025ris}, and the crucial task of dynamic channel modeling for FAS-UAV communications \cite{jiang2025dyn}.

Despite these promising advancements, the application of FAS to URLLC-enabled UAV missions remains nascent. A critical limitation of these prior works is that their analyses are implicitly rooted in the infinite blocklength regime, relying on classical metrics like ergodic capacity and outage probability. Such metrics, predicated on capacity-achieving infinitely long codewords, are inadequate for URLLC, which relies on short-packet communication to meet stringent latency requirements. Operating in the finite blocklength regime introduces a fundamentally different performance paradigm, governed by the non-trivial trade-off between reliability and latency.

This finite blocklength challenge, coupled with other existing gaps, motivates this work. First, current research lacks a unified analytical framework that jointly considers the compact diversity of FAS-UAVs under these structures of finite blocklength theory. Second, existing studies often rely on oversimplified channel models, failing to capture the diverse nature of air-to-ground propagation across different environments. Thirdly, a universal oversight is the neglect of the overhead associated with practical FAS port selection, which incurs non-trivial time delay and energy consumption. Last but not least, the analytical intractability imposed by the complex statistics of spatially correlated channels further compounds these challenges, hindering the development of tractable performance evaluation and optimization frameworks.

Against this background, we aim to bridge these significant gaps by developing a holistic framework for the analysis and optimization of an FAS-enabled UAV relaying network. Our main contributions are summarized as follows:
\begin{enumerate}
\item We establish a comprehensive analytical framework for evaluating the performance of FAS-UAV systems in the finite blocklength regime. To overcome the mathematical intractability of spatially correlated channels, we employ a tractable model based on the channel's underlying eigenvalue-weighted independent diversity branches. This approach facilitates the derivation of closed-form expressions for the block error rate (BLER) considering Nakagami-$m$ fading. The framework is sufficiently general to characterize performance across fundamentally different propagation environments, providing distinct analyses for both line-of-sight (LoS)-dominant rural and probabilistic non-LoS (NLoS) urban scenarios.

\item We derive insightful asymptotic expressions for the BLER that characterize the system's fundamental performance limits in the high signal-to-noise ratio (SNR) regime. This analysis explicitly quantifies the system's diversity order, revealing that reliability scales with the product of the channel's fading severity parameter and the effective spatial degrees of freedom harvested by the FAS. Furthermore, our analysis proves the existence of an error floor in both scenarios, a critical insight demonstrating that the overall system performance becomes fundamentally bottlenecked by the reliability of the first-hop (BS-to-UAV) link at high UAV transmit powers.

\item We formulate an energy efficiency (EE) maximization problem that holistically incorporates the non-negligible time and energy overheads inherent to the FAS port selection mechanism. Unlike idealized models prevalent in the literature, this realistic formulation captures the critical trade-off between the diversity gains afforded by an increasing number of FAS ports and the corresponding operational costs. To address this non-convex mixed-integer problem, we propose an efficient hierarchical algorithm that decouples the optimization variables to find a high-quality solution with low complexity.

\item Through simulations, we validate our analytical framework and uncover several critical design insights for practical FAS-UAV deployment. Our results demonstrate the existence of an optimal, finite number of FAS ports that maximizes EE, which is a direct consequence of the trade-off between diminishing diversity returns and increasing operational overhead. We also reveal that the optimal UAV deployment strategy is fundamentally different in rural versus urban environments. Additionally, the optimal altitude in urban settings is governed by a non-trivial balance between path loss and blockage probability, in stark contrast to the monotonic behavior observed in rural scenarios. Ultimately, this comprehensive framework facilitates the design and deployment of ultra-reliable low-latency UAV communications, essential for mission-critical IoT applications and the realization of robust aerial edge networks.
\end{enumerate}

The rest of this paper is organized as follows. Section~\ref{sec:system_model} presents the system and channel models. Section~\ref{sec:analysis} provides the detailed performance analysis. Section~\ref{sec:eemax} formulates and solves the EE maximization problem. In Section~\ref{sec:results}, we present numerical results, and Section~\ref{sec:conclude} concludes the paper.

 \emph{Notations:}
 Scalars are denoted by italic letters, while vectors and matrices are denoted by boldface lowercase and uppercase letters, respectively.
 Sets are represented by calligraphic letters (e.g., $\mathcal{K}$).
 $\mathbb{C}^{M \times N}$ denotes the space of $M \times N$ complex-valued matrices.
 $J_0(\cdot)$ is a first-order modified Bessel function of the zeroth order.
 $\text{rank}(\mathbf{X})$ denote its rank. $\text{diag}(x_1, \dots, x_N)$ returns a diagonal matrix with diagonal elements $x_1, \dots, x_N$.
 $\mathcal{CN}(\mu, \sigma^2)$ denotes a circularly symmetric complex Gaussian (CSCG) distribution with mean $\mu$ and variance $\sigma^2$.

\begin{figure*}[t]
\centering
\subfloat[Rural scenario]{%
\includegraphics[width=0.48\textwidth]{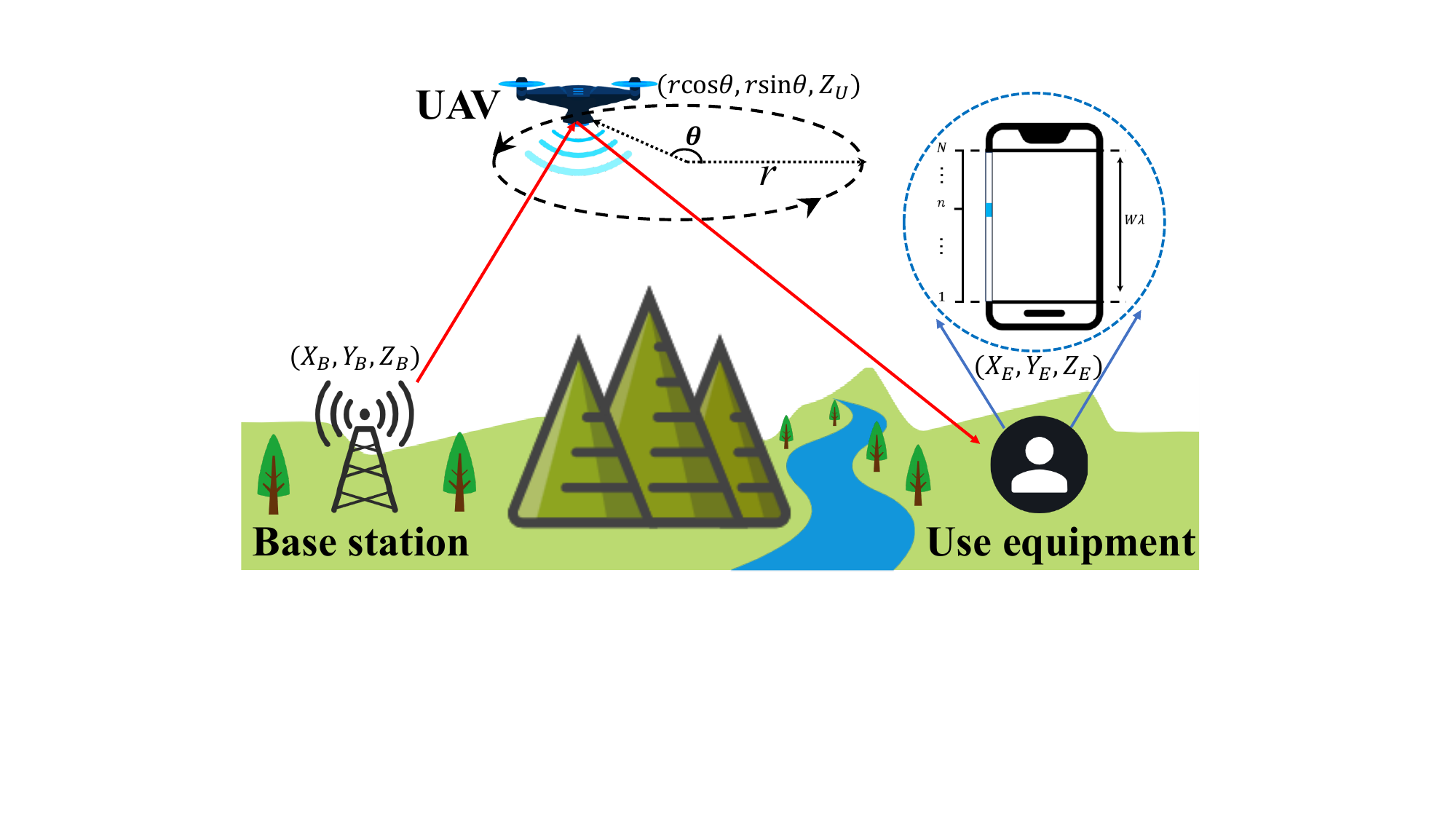}\label{fig:rural}}
\hfil
\subfloat[Urban scenario]{%
\includegraphics[width=0.48\textwidth]{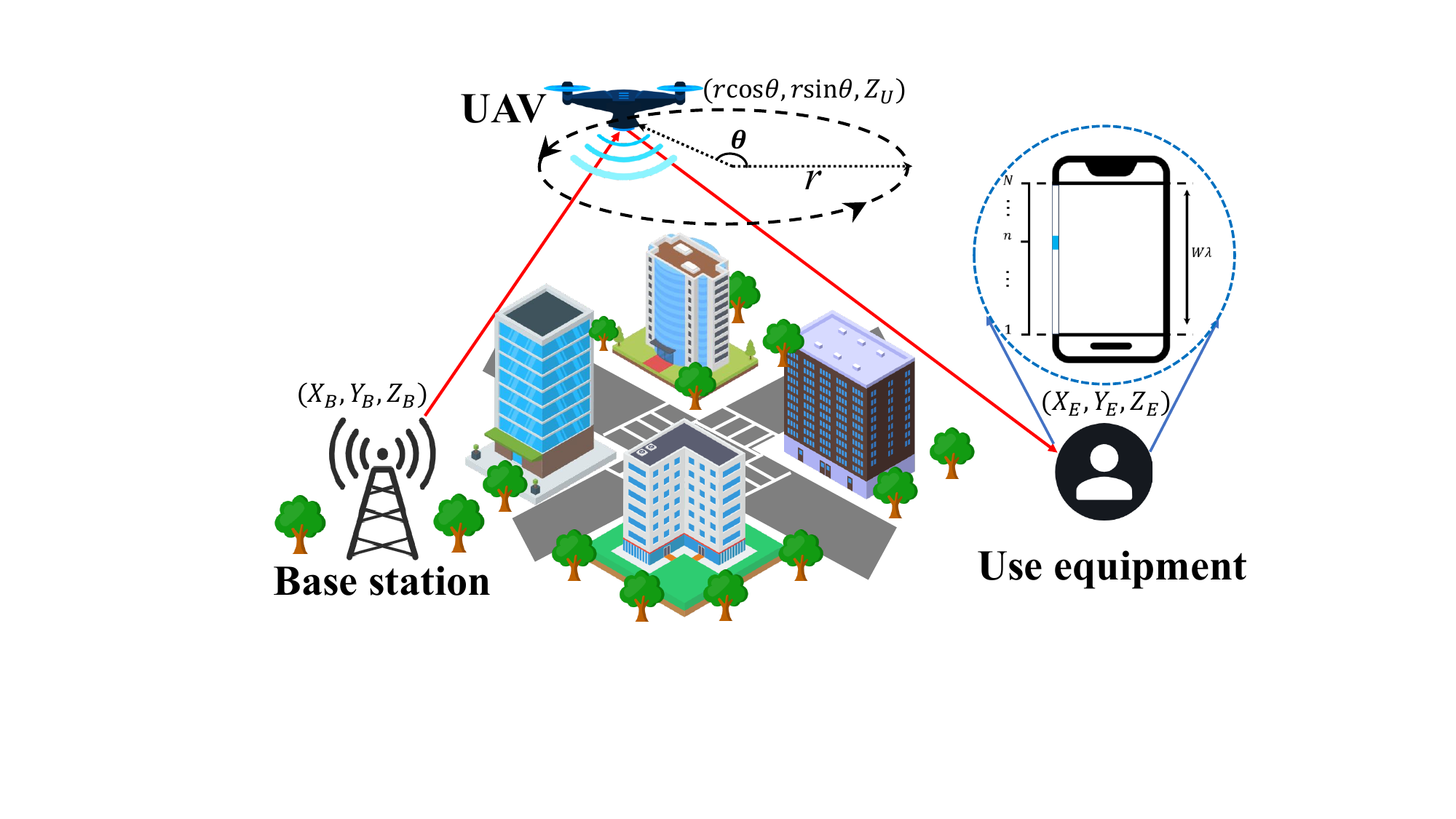}\label{fig:urban}}
\caption{A FAS-enabled UAV cooperative short-packet system under the rural and urban scenarios.}\label{fig:frame}
\vspace{-3mm}
\end{figure*}

\section{The FAS-UAV System Model}\label{sec:system_model}
We consider a three-node, UAV-aided downlink communication system, as illustrated in Fig.~\ref{fig:frame}. The system comprises a BS that transmits data to a user equipment (UE) via a half-duplex, decode-and-forward (DF) UAV relay. A key feature of this architecture is the adoption of FAS at the UE to provide spatial diversity and enhance link reliability. To facilitate a comprehensive analysis, the subsequent sections present the detailed models for the wireless channels, signal structure, and path loss characteristics pertinent to this system architecture under both rural and urban deployment scenarios.

\subsection{Channel Model}\label{ssec:model}
We assume the $N$ ports of FAS are uniformly distributed over a linear aperture of length $W\lambda$,\footnote{The results in this paper can be extended to cope with multi-dimensional FAS structures while this paper opts for the linear structure for simplicity.} inducing spatial correlation across the small-scale fading channels. We characterize this correlation using the Hermitian Toeplitz matrix $\mathbf{J}$, whose elements are given by the Jakes' model, $J_{m,n} = J_0( 2\pi W\,(m-n)/(N-1) )$ \cite{New2024information,xzhu2023fris}. The eigendecomposition of the correlation matrix is denoted by $\mathbf{J}=\mathbf{U}\mathbf{\Lambda}\mathbf{U}^{H}$, where $\mathbf{U}$ is a unitary matrix and $\mathbf{\Lambda}=\mathrm{diag}(\lambda_1,\dots,\lambda_N)$ contains the eigenvalues in non-increasing order. This allows the correlated channel vector $\mathbf{h}=[h_1,\dots,h_N]^T$ to be represented as
\begin{equation}\label{mahug}
\mathbf{h} = \mathbf{U}\,\mathbf{\Lambda}^{1/2}\mathbf{g},
\end{equation}
in which $\mathbf{g}=[g_1,\dots,g_N]^T$ is a vector of independent and identically distributed (i.i.d.) baseband components. To capture a wide range of fading conditions, we adopt the Nakagami-$m$ distribution, where the power of each component, $|g_n|^2$, follows a Gamma distribution with severity parameter $m_2$.

Crucially, as a linear combination of Nakagami-$m$ variates, the physical channel $h_n$ at each port is not Nakagami-$m$ distributed, and its statistics is mathematically intractable. A direct analysis of the true selected channel gain, $\max\{|h_n|^2\}$, is therefore infeasible. To circumvent this, we adopt a widely used and effective approach based on the channel's underlying independent diversity branches \cite{zhao2025analy}. This method provides a tractable yet insightful framework that captures the essence of the diversity offered by the FAS. The number of effective diversity branches is given by the rank of the correlation matrix, $N_{\text{eff}} \triangleq \text{rank}(\mathbf{J})$. Accordingly, the analytical model for the selected channel power gain is defined as \cite{zhu2025fas,zhao2025analy,zhu2025ua}
\begin{equation}\label{eq:analytical_model_def}
|h_{\rm FAS}|^2 \triangleq \max\{\lambda_1|g_1|^2, \dots, \lambda_{N_{\mathrm{eff}}}|g_{N_{\mathrm{eff}}}|^2\}.
\end{equation}

\subsection{Relaying Protocol and Signal Model}
The communication protocol follows a two-phase DF relaying scheme. The received baseband signals at the UAV and UE are, respectively, expressed as
\begin{align}
y_1 &= \sqrt{P_1\beta_1}gx_t + n_1, \label{eq:y1_signal} \\
y_2  &= \sqrt{P_2\beta_2} h_{\rm FAS}x_t + n_2, \label{eq:y2_signal}
\end{align}
where $g$ is the first-hop channel coefficient, with its power $|g|^2$ following a Gamma distribution with parameter $m_1$. The term $h_{\rm FAS}$ is the effective second-hop channel coefficient, whose power gain is given by the model in \eqref{eq:analytical_model_def}. Other parameters include the transmit powers $P_1, P_2$; large-scale path loss coefficients $\beta_1, \beta_2$; transmit signal $x_t$; and additive white Gaussian noise (AWGN) terms $n_1, n_2 \sim\mathcal{CN}(0,\sigma^2)$.

\subsection{Path Loss and Large-Scale Fading}
We consider a three-dimensional (3-D) Cartesian coordinate system where the UAV flies on a circular trajectory of radius $r$ at altitude $Z_{\rm U}$, while the BS and UE are at fixed locations. The slant ranges for the BS-UAV and UAV-UE links, $d_1(\theta)$ and $d_2(\theta)$, are, respectively, given by
\begin{equation}
\left\{\begin{aligned}
d_1(\theta)&=\sqrt{(r\cos\theta\!-\!X_{\rm B})^2 \!+\! (r\sin\theta\!-\!Y_{\rm B})^2 \!+\! (Z_{\rm U}\!-\!Z_{\rm B})^2},\\
d_2(\theta)&=\sqrt{(r\cos\theta\!-\!X_{\rm E})^2 \!+\! (r\sin\theta\!-\!Y_{\rm E})^2 \!+\! (Z_{\rm U}\!-\!Z_{\rm E})^2}.
\end{aligned}\right.
\end{equation}

\subsubsection{Rural Scenario}
The rural environment, see Fig.~\ref{fig:frame}(a), is characterized by predominantly LoS links, for which the free-space path loss model applies. The path loss in dB is given by $L_{i}^{\rm FS} = 20\log_{10}(4\pi f_c d_{i}(\theta)/c)$, $i\in\{1,2\}$, where $f_c$ is the carrier frequency and $c$ is the speed of light. Accordingly, the large-scale fading coefficient is $\beta_{i}^{\rm RS} = (c/4\pi f_c d_i(\theta))^2$. This path loss determines the average SNRs for the two hops as
\begin{align}
\bar{\gamma}_1^{\rm RS}(\theta) &= \frac{P_1 \beta_1^{\rm RS}(\theta)}{\sigma^2}, \\
\bar{\gamma}_2^{\rm RS}(\theta) &= \frac{P_2 \beta_2^{\rm RS}(\theta) \sum_{n=1}^{N_{\rm eff}} \lambda_n}{\sigma^2},\label{eq:gamma2}
\end{align}
where the normalization factor $\sum_{n=1}^{N_{\rm eff}}\lambda_{n} = \mathbb{E}[\sum_{n=1}^{N_{\rm eff}}\lambda_n|g_n|^2]$ is the total average power gain across all decorrelated diversity branches. This term serves as an analytical convenience and provides an upper bound on the true average selected channel gain, $\mathbb{E}[\max_{n}\{\lambda_n|g_n|^2\}]$. The instantaneous SNR for each hop is then obtained by incorporating the small-scale fading as
\begin{align}
\gamma_1^{\rm RS} &= \bar{\gamma}_1^{\rm RS}(\theta) |g|^2, \label{eq:snr1_rural_inst} \\
\gamma_{2}^{\rm RS}(\theta) &= \overline{\gamma}_{2}^{\rm RS}(\theta) \frac{|h_{\rm FAS}|^{2}}{\sum_{n=1}^{N_{\rm eff}}\lambda_{n}}. \label{gamsimax}
\end{align}

\subsubsection{Urban Scenario}
In contrast to the rural setting, the urban scenario depicted in Fig.~\ref{fig:frame}(b) incorporates a probabilistic mixture of LoS and NLoS conditions. The path loss model is augmented with an excess loss term $\eta^k$ for each link type $k \in \{\rm LoS, NLoS\}$, such that $L_{i}^{k} = 20\log_{10}(4\pi f_c d_{i}(\theta)/c) + \eta_{i}^{k}$. Consequently, the large-scale fading coefficient is $\beta_{i}^{k} = (c/4\pi f_c d_i(\theta))^2  10^{-\eta_{i}^{k}/10}$. For a given hop type $k$, the average SNRs are defined as
\begin{align}
\bar{\gamma}_1^k(\theta) &= \frac{P_1 \beta_1^k(\theta)}{\sigma^2}, \\
\bar{\gamma}_2^k(\theta) &= \frac{P_2 \beta_2^k(\theta) \sum_{n=1}^{N_{\rm eff}} \lambda_n}{\sigma^2}.
\end{align}
The corresponding instantaneous SNRs are given by
\begin{align}
\gamma_1^k &= \bar{\gamma}_1^k(\theta) |g^k|^2, \\
\gamma_{2}^{k}(\theta) &= \overline{\gamma}_{2}^{k}(\theta) \frac{|h_{\rm FAS}^{k}|^{2}}{\sum_{n=1}^{N_{\rm eff}}\lambda_{n}}.
\end{align}
The probability of an LoS condition for hop $i$ is modeled as a function of the elevation angle $\phi_i(\theta)$ as \cite{hour2014opt}
\begin{equation}
P^{\rm LoS}_{i}(\theta)=\frac{1}{1+a\exp\left(-b[\phi_i(\theta)-a]\right)},
\end{equation}
with $P^{\rm NLoS}_{i}(\theta)=1-P^{\rm LoS}_{i}(\theta)$. The elevation angle in degrees, $\phi_i(\theta)$, is given by $\phi_1(\theta)=\frac{180}{\pi}\arcsin((Z_{\rm U}-Z_{\rm B})/d_1(\theta))$ and $\phi_2(\theta)=\frac{180}{\pi}\arcsin((Z_{\rm U}-Z_{\rm E})/d_2(\theta))$. For a dense urban environment, empirically, we have $a=12.08$ and $b=0.11$.

\section{Performance Analysis}\label{sec:analysis}
For a given finite blocklength and a fixed target error probability of the system, the coding rate $R(l,\epsilon)$ of a finite blocklength packet is approximated as \cite{polyanskiy2010channel}
\begin{equation}\label{Rnx}
R(L,\epsilon)=\frac{B}{L}
= C(\gamma)-\sqrt{\frac{Z(\gamma)}{L}}Q^{-1}(\epsilon)+O\left(\frac{\log_2L}{L}\right),
\end{equation}
where $B$ is the packet size in terms of the number of transmission bits, $L$ is the blocklength, $C(\gamma)=\log_2(1+\gamma)$ is the Shannon capacity, $Z(\gamma)=\left(1-\frac{1}{(1+\gamma)^2}\right)(\log_2e)^2$, measured in squared information units per channel usage, refers to the channel dispersion that represents the variation of stochastic channel compared with a deterministic channel for the same capacity, $\epsilon$ is the expected error probability, $Q^{-1}(\cdot)$ is the inverse Gaussian Q-function $Q(x)=\frac{1}{2\pi}\int_x^\infty \exp(-\frac{t^2}{2})dt$, and the term $O\left(\frac{\log_2L}{L}\right)$ represents a remainder of order $\frac{\log_2L}{L}$, which becomes negligible for $L\geq 100$.

\subsection{Rural Scenario}
According to \eqref{Rnx}, the instantaneous BLER in the $i$-th phase is formally expressed as
\begin{equation}
\epsilon_i^{\rm RS}\approx Q\left(\frac{C(\gamma^{\rm RS}_i)-R}{\sqrt{Z(\gamma^{\rm RS}_i)/L}}\right).
\end{equation}
In this manner, the average BLER at the UE under the given $\theta$, which can be calculated as
\begin{equation}\label{baepsn}
\bar{\epsilon}_i^{\rm RS}\approx\int_0^\infty Q\left(\frac{C(\gamma^{\rm RS}_i)-R}{\sqrt{Z(\gamma^{\rm RS}_i)/L}}\right)f_{\gamma_i^{\rm RS}}(x)dx,
\end{equation}
where $f_{\gamma_i^{\rm RS}}(x)$ denotes the probability density function (PDF) of ${\gamma_i^{\rm RS}}$. However, a direct evaluation of the integral in \eqref{baepsn} is analytically intractable due to the complex form of the Q-function in the integrand. To render this integral tractable, we employ an accurate piecewise linear approximation for the Q-function within a defined SNR range $[\rho_L, \rho_H]$, given by \cite{yuan2022per}
\begin{equation}
Q\left(\frac{C(\gamma^{\rm RS}_i)-R}{\sqrt{Z(\gamma^{\rm RS}_i)/L}}\right)=
\begin{cases}
1, & \gamma^{\rm RS}_i \leq \rho_L,\\
\frac{1}{2} - \chi(\gamma^{\rm RS}_i - \tau), & \rho_L < \gamma^{\rm RS}_i < \rho_H,\\
0, & \gamma^{\rm RS}_i \geq \rho_H,
\end{cases}
\end{equation}
where $\chi = 1/\sqrt{2\pi(2^{R}-1)/L}$, $\tau=2^{R}-1$, $\rho_L = \tau-1/(2\chi)$, and $\rho_H=\tau+1/(2\chi)$. Thus, the average BLER becomes
\begin{equation}\label{epaprir}
\bar{\epsilon}^{\rm RS}_i\approx\chi\int_{\rho_L}^{\rho_H} F_{\gamma_i^{\rm RS}}(x)dx,
\end{equation}
in which $F_{\gamma_i^{\rm RS}}(x)$ denotes the cumulative distribution function (CDF) of $\gamma_i^{\rm RS}$. Afterwards, the average BLER for each hop is derived separately. We begin with the first hop.

\subsubsection{First-Hop Average BLER}
The average BLER for the first hop, $\overline{\epsilon}_{1}^{\mathrm{RS}}$, is determined by the statistics of $\gamma_1^{\mathrm{RS}}$. Its CDF is provided in the following lemma.

\begin{lemma}\label{lema1}
The expression for ${{\rm F}_{\gamma_1^{\rm RS}}(x)}$ can be given by
\begin{equation}\label{eq:cdf_snr_closed}
{{\rm F}_{\gamma_1^{\rm RS}}(x)}=1 - e^{-\gamma\vartheta_1} \sum_{k=0}^{m_1-1} \frac{(\gamma\vartheta_1)^k}{k!},
\end{equation}
where $\vartheta_{1}(\theta) = m_1 / \bar{\gamma}_1^{\rm RS}(\theta)$.
\end{lemma}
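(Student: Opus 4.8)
The plan is to recognize that $\gamma_1^{\rm RS} = \bar{\gamma}_1^{\rm RS}(\theta)\,|g|^2$ is a scaled Gamma random variable, since by assumption $|g|^2$ follows a Gamma distribution with shape parameter $m_1$ (and unit mean, so rate $m_1$). First I would write down the PDF of $|g|^2$, namely $f_{|g|^2}(y) = \frac{m_1^{m_1}}{\Gamma(m_1)} y^{m_1-1} e^{-m_1 y}$ for $y \ge 0$, and then apply the change of variables $x = \bar{\gamma}_1^{\rm RS}(\theta)\, y$ to obtain the PDF of $\gamma_1^{\rm RS}$ as another Gamma density with shape $m_1$ and rate $\vartheta_1(\theta) = m_1/\bar{\gamma}_1^{\rm RS}(\theta)$.

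Next I would compute the CDF by integrating this density from $0$ to $x$, i.e. $F_{\gamma_1^{\rm RS}}(x) = \frac{\vartheta_1^{m_1}}{\Gamma(m_1)} \int_0^x t^{m_1-1} e^{-\vartheta_1 t}\,dt = \frac{\gamma_{\rm inc}(m_1, \vartheta_1 x)}{\Gamma(m_1)}$, where $\gamma_{\rm inc}$ is the lower incomplete Gamma function. Since $m_1$ is a positive integer (the Nakagami fading severity parameter is taken to be an integer here), the regularized lower incomplete Gamma function admits the well-known finite-sum closed form $\frac{\gamma_{\rm inc}(m_1, z)}{\Gamma(m_1)} = 1 - e^{-z}\sum_{k=0}^{m_1-1} \frac{z^k}{k!}$. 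Substituting $z = \vartheta_1 x$ and noting that the excerpt writes the SNR argument as $\gamma$ (a slight abuse of notation where $\gamma$ here denotes the dummy SNR variable $x$, or more precisely $\gamma\vartheta_1$ should read $x\vartheta_1$) yields exactly \eqref{eq:cdf_snr_closed}.

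The only mild obstacle is justifying the integer-$m_1$ assumption that licenses the finite-sum form; this is standard in the Nakagami-$m$ literature and is implicitly assumed throughout the paper, so I would simply invoke it. A secondary point worth a remark is reconciling the notation: the statement uses $\gamma$ where one expects the CDF argument $x$, so I would either clarify that $\gamma$ is being used as the evaluation point or rewrite consistently. Beyond that, the derivation is a routine identification of a scaled Gamma variate followed by the textbook series expansion of its CDF, with no substantive difficulty.
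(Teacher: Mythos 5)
Your proposal is correct and follows essentially the same route as the paper: identify $\gamma_1^{\rm RS}$ as a scaled Gamma variate (the paper merely starts one step earlier, from the Nakagami-$m$ amplitude PDF, and changes variables to reach the same Gamma power distribution), express the CDF as a regularized lower incomplete Gamma function, and apply the integer-$m_1$ finite-sum identity with the scaling $\vartheta_1 = m_1/\bar{\gamma}_1^{\rm RS}(\theta)$. Your remark that the $\gamma$ appearing in the statement is just the CDF argument (i.e., should read $x\vartheta_1$) is also consistent with the paper's own usage.
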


\begin{proof}
See Appendix \ref{plema1}.
\end{proof}

By substituting the CDF from \eqref{eq:cdf_snr_closed} into \eqref{epaprir}, $\overline{\epsilon}_{1}^{\mathrm{RS}}$ can be obtained as
\begin{equation}
\label{eq:bler_integral_start}
\begin{aligned}
\bar{\epsilon}^{\rm RS}_1 &\approx \chi \int_{\rho_L}^{\rho_H} \left(1 - e^{-x\vartheta_1} \sum_{k=0}^{m_1-1} \frac{(x\vartheta_1)^k}{k!}\right) dx \\
&= \chi \left( \int_{\rho_L}^{\rho_H} dx - \sum_{k=0}^{m_1-1} \frac{\vartheta_1^k}{k!} \int_{\rho_L}^{\rho_H} x^k e^{-x\vartheta_1} dx \right).
\end{aligned}
\end{equation}
The integral term in \eqref{eq:bler_integral_start} is evaluated using the definition of the lower incomplete gamma function, $\gamma(s,z)=\int_0^z t^{s-1}e^{-t}dt$, which gives
\begin{equation}
\label{eq:integral_xk_exp}
\int_{\rho_L}^{\rho_H} x^k e^{-x\vartheta_1} dx = \frac{\gamma(k+1, \rho_H\vartheta_1) - \gamma(k+1, \rho_L\vartheta_1)}{\vartheta_1^{k+1}}.
\end{equation}
Substituting \eqref{eq:integral_xk_exp} into \eqref{eq:bler_integral_start}, we obtain
\begin{equation}
\label{eq:bler_intermediate}
\bar{\epsilon}^{\rm RS}_1 \!\approx\! \chi \left( (\rho_H \!-\! \rho_L)\! - \!\sum_{k=0}^{m_1-1} \frac{\gamma(k\!+\!1, \rho_H\vartheta_1)\! - \!\gamma(k\!+\!1, \rho_L\vartheta_1)}{k!\vartheta_1} \right).
\end{equation}
Using the identity $\gamma(k+1, z) = k!(1-e^{-z}\sum_{j=0}^{k} z^j/j!)$ from \cite{abramowitz1972}, \eqref{eq:bler_intermediate} can be obtained as (\ref{eq:bler_final_double_col}) (see top of next page).
\begin{figure*}
\begin{equation}\label{eq:bler_final_double_col}
\bar{\epsilon}^{\rm RS}_1 \approx \chi \left[(\rho_H - \rho_L) - \frac{1}{\vartheta_1}\sum_{k=0}^{m_1-1} \left( e^{-\rho_L\vartheta_1}\sum_{j=0}^{k}\frac{(\rho_L\vartheta_1)^j}{j!}- e^{-\rho_H\vartheta_1}\sum_{j=0}^{k}\frac{(\rho_H\vartheta_1)^j}{j!} \right)\right]
\end{equation}
\hrulefill
\end{figure*}
In this way, we obtain the average BLER $\bar{\epsilon}^{\rm RS}_1$. For $\bar{\epsilon}^{\rm RS}_2$, we first obtain the CDF of $\bar{\epsilon}^{\rm RS}_2$ in Lemma \ref{lema2}.

\subsubsection{Second-Hop Average BLER}
For the FAS-enabled second hop, the average BLER is determined by the statistics of $|h_{\rm FAS}|^2$ defined in \eqref{eq:analytical_model_def}. The CDF of the resulting instantaneous SNR is given in the following lemma.

\begin{lemma}\label{lema2}
For ${{\rm F}_{\gamma_2^{\rm RS}}(x)}$, the CDF can be expressed as
\begin{equation}\label{lem2xy}
{{\rm F}_{\gamma_2^{\rm RS}}(x)} =\prod_{n=1}^{{N_{\rm eff}} } \left(1 - e^{-\frac{x \vartheta_2}{\lambda_n}} \sum_{k=0}^{m_2-1} \frac{1}{k!}\left(\frac{x \vartheta_2}{\lambda_n}\right)^k\right),
\end{equation}
where $\vartheta_{2}(\theta) = m_2 \sum_{n=1}^{N_{\rm eff}}  \lambda_n / \bar{\gamma}_2^{\rm RS}(\theta)$.
\end{lemma}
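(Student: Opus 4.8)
The plan is to derive $F_{\gamma_2^{\rm RS}}(x)$ by reducing it to a product of scalar Gamma cumulative distribution functions (CDFs), exploiting the fact that the FAS-selected gain in \eqref{eq:analytical_model_def} is an order statistic --- a maximum --- over mutually independent diversity branches. The first step is to remove the deterministic scaling in \eqref{gamsimax}: since $\gamma_2^{\rm RS}(\theta)=\bar{\gamma}_2^{\rm RS}(\theta)\,|h_{\rm FAS}|^2/\sum_{n=1}^{N_{\rm eff}}\lambda_n$, the event $\{\gamma_2^{\rm RS}\le x\}$ coincides with $\{|h_{\rm FAS}|^2\le x\sum_{j=1}^{N_{\rm eff}}\lambda_j/\bar{\gamma}_2^{\rm RS}(\theta)\}$, so that
\[
F_{\gamma_2^{\rm RS}}(x)=\Pr\!\left[\max_{1\le n\le N_{\rm eff}}\lambda_n|g_n|^2\;\le\;\frac{x\sum_{j=1}^{N_{\rm eff}}\lambda_j}{\bar{\gamma}_2^{\rm RS}(\theta)}\right].
\]

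Next I would invoke the independence of $g_1,\dots,g_{N_{\rm eff}}$ built into \eqref{mahug}: for any fixed threshold $y$, the events $\{\lambda_n|g_n|^2\le y\}$ are mutually independent, so the CDF of the maximum factorizes as $\prod_{n=1}^{N_{\rm eff}}\Pr[\,|g_n|^2\le y/\lambda_n\,]$. It then remains to plug in the marginal law of $|g_n|^2$. The normalization $\sum_n\lambda_n=\mathbb{E}[\sum_n\lambda_n|g_n|^2]$ used in \eqref{eq:gamma2} forces $\mathbb{E}[|g_n|^2]=1$, so each $|g_n|^2$ is Gamma-distributed with integer shape $m_2$ and unit mean, with CDF $1-e^{-m_2 z}\sum_{k=0}^{m_2-1}(m_2 z)^k/k!$ --- the same finite-sum form of the lower incomplete gamma function invoked in Lemma~\ref{lema1}. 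Substituting $z=x\sum_{j=1}^{N_{\rm eff}}\lambda_j/(\lambda_n\bar{\gamma}_2^{\rm RS}(\theta))$ and identifying $m_2\sum_{j=1}^{N_{\rm eff}}\lambda_j/\bar{\gamma}_2^{\rm RS}(\theta)=\vartheta_2(\theta)$ collapses the argument in branch $n$ to $x\vartheta_2/\lambda_n$; taking the product over the $N_{\rm eff}$ branches yields exactly \eqref{lem2xy}.

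The only point that needs care is the per-branch scaling bookkeeping: one must verify that the global power-normalization constant $\sum_{n}\lambda_n$ and the branch weight $\lambda_n$ combine so that both the exponent and the polynomial argument of each factor reduce cleanly to the single quantity $x\vartheta_2/\lambda_n$, and that the integer-$m_2$ truncated-series form of the Gamma CDF is used consistently with Lemma~\ref{lema1}. The maximum-to-product factorization and the incomplete-gamma identity are otherwise standard, so beyond this algebra no genuine obstacle arises.
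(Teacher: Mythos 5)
Your proposal is correct and follows essentially the same route as the paper's proof: factorize the CDF of the maximum over the independent branches $\lambda_n|g_n|^2$, use the unit-mean Gamma (integer-$m_2$) CDF of $|g_n|^2$, and observe that the scaling by $\sum_n\lambda_n/\bar{\gamma}_2^{\rm RS}(\theta)$ collapses each branch argument to $x\vartheta_2/\lambda_n$. The only cosmetic difference is that the paper first writes each factor as an incomplete-gamma ratio and simplifies $\vartheta_2$ via $P_2\beta_2^{\rm RS}/\sigma^2$ before invoking the integer-shape identity, which you apply directly.
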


\begin{proof}
See Appendix \ref{plema2}.
\end{proof}

Substituting \eqref{lem2xy} into \eqref{epaprir}, $\bar{\epsilon}^{\rm RS}_2$ can be evaluated as
\begin{equation}\label{ruaera}
\begin{aligned}
\bar{\epsilon}^{\rm RS}_2
&\approx\chi\int_{\rho_L}^{\rho_H}\prod_{n=1}^{{N_{\rm eff}} } \left(1 - e^{-\frac{x \vartheta_2}{\lambda_n}} \sum_{k=0}^{m_2-1} \frac{1}{k!}\left(\frac{x \vartheta_2}{\lambda_n}\right)^k\right)dx.
\end{aligned}
\end{equation}

To obtain a closed-form solution to \eqref{ruaera}, we apply the exact derivation method. Specifically, for simplicity, we let
\begin{equation}
A_n(x) \triangleq e^{-\frac{x \vartheta_2}{\lambda_n}} \sum_{k=0}^{m_2-1} \frac{1}{k!}\left(\frac{x \vartheta_2}{\lambda_n}\right)^k.
\end{equation}
By applying the principle of inclusion-exclusion, the product can be expanded into a sum of $2^{{N_{\rm eff}} }$ terms, i.e.,
\begin{equation}\label{ruaera1}
\prod_{n=1}^{{N_{\rm eff}} } (1 - A_n(x)) = \sum_{\mathcal{S} \subseteq \{1, \dots, {N_{\rm eff}} \}} (-1)^{|\mathcal{S}|} \prod_{j \in \mathcal{S}} A_j(x),
\end{equation}
where the sum is taken over all possible subsets $\mathcal{S}$ of the index set $\{1, \dots, {N_{\rm eff}} \}$, and $|\mathcal{S}|$ denotes the cardinality of subset $\mathcal{S}$. By convention, if $\mathcal{S} = \emptyset$, the product term is equal to $1$.

Substituting \eqref{ruaera1} into \eqref{ruaera}, and interchanging the order of the finite summation and integration, we obtain
\begin{equation}
\overline{\epsilon}^{\rm RS}_2 = \chi \sum_{\mathcal{S} \subseteq \{1, \dots, {N_{\rm eff}} \}} (-1)^{|\mathcal{S}|} \int_{\rho_L}^{\rho_H}  \prod_{j \in \mathcal{S}} A_j(x)  dx. \label{eq:sum_of_integrals}
\end{equation}
The problem is now reduced to finding a closed-form solution for the integral corresponding to each subset $\mathcal{S}$. For any non-empty subset $\mathcal{S}$, the integrand has the structure
\begin{align}
\prod_{j \in \mathcal{S}} A_j(x) &= \prod_{j \in \mathcal{S}} e^{-\frac{x \vartheta_2}{\lambda_j}} \sum_{k=0}^{m_2-1} \frac{1}{k!}\left(\frac{x \vartheta_2}{\lambda_j}\right)^k\nonumber \\
&= e^{-x \sum\limits_{j \in \mathcal{S}} \frac{\vartheta_2}{\lambda_j}} P_{\mathcal{S}}(x), \label{eq:integrand_structure}
\end{align}
in which $P_{\mathcal{S}}(x) \triangleq \prod_{j \in \mathcal{S}} \sum_{k=0}^{m_2-1} \frac{1}{k!}\left(\frac{x \vartheta_2}{\lambda_j}\right)^k$ is a polynomial in $x$ of degree $|\mathcal{S}|(m-1)$. To construct the final closed-form expression, we first evaluate the term for the empty set $\mathcal{S}=\emptyset$ in (\ref{eq:sum_of_integrals}), which contributes
\begin{equation}
\chi \int_{\rho_L}^{\rho_H} 1 dx = \chi(\rho_H - \rho_L).
\end{equation}
For any non-empty subset $\mathcal{S}$, we define the coefficient
\begin{equation}
b_{\mathcal{S}} \triangleq \sum\limits_{j \in \mathcal{S}} \frac{\vartheta_2}{\lambda_j},
\end{equation}
As a result, we expand the polynomial $P_{\mathcal{S}}(x)$ in its canonical monomial basis as
\begin{equation}
P_{\mathcal{S}}(x) = \sum_{a=0}^{|\mathcal{S}|(m_2-1)} c_a(\mathcal{S}) x^a,
\end{equation}
where $c_a(\mathcal{S})$ are found by expanding the product definition of $P_{\mathcal{S}}(x)$. The integral for each subset $\mathcal{S}$ thus becomes a linear combination of elementary integrals. A standard result from integral calculus states that for a non-negative integer $a$,
\begin{equation}
\int x^a e^{-bx} dx = -\frac{a!}{b^{a+1}} e^{-bx} \sum_{i=0}^a \frac{(bx)^i}{i!}.
\end{equation}
To simplify the final expression, we define a helper function $\mathcal{G}(y; a, b)$ for the evaluation of the definite integral
\begin{equation}
\mathcal{G}(y; a, b) \triangleq -\frac{a!}{b^{a+1}} e^{-by} \sum\nolimits_{i=0}^a \frac{(by)^i}{i!}.
\end{equation}
Combining these results, the exact closed-form expression for the BLER is meticulously constructed as
\begin{multline}\label{eq:bler_final_xscol}
\overline{\epsilon}^{\rm RS}_2 =\chi(\rho_H - \rho_L)+ \\
\chi \sum_{\substack{\mathcal{S} \subseteq \{1, \dots, N_{\rm eff}\} \\ \mathcal{S} \neq \emptyset}} (-1)^{|\mathcal{S}|} \sum_{a=0}^{|\mathcal{S}|(m_2-1)} c_a(\mathcal{S})\\
\times \left[ \mathcal{G}(\rho_H; a, b_{\mathcal{S}}) - \mathcal{G}(\rho_L; a, b_{\mathcal{S}}) \right].
\end{multline}
In this way, we obtain the average BLER $\overline{\epsilon}^{\rm RS}_2$.

To obtain more useful insights, we consider an approximation that is highly accurate in the high SNR regime.

\begin{theorem}\label{theoreapd1}
In high SNR regime, $\bar{\epsilon}^{\rm RS}_2$ can be written as
\begin{multline}
\bar{\epsilon}^{\rm RS}_2 \approx \frac{\chi \left( \rho_H^{m_2 {N_{\rm eff}}  + 1} - \rho_L^{m_2 {N_{\rm eff}}  + 1} \right)}{m_2 {N_{\rm eff}}  + 1}\\
\times \left(\frac{\vartheta_2^{m_2}}{\Gamma(m_2+1)}\right)^{{N_{\rm eff}} }\prod_{n=1}^{{N_{\rm eff}} } \lambda_n^{-m_2}.
\end{multline}
\end{theorem}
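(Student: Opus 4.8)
The plan is to start from the exact integral representation in \eqref{ruaera} and replace the CDF $F_{\gamma_2^{\rm RS}}(x)$ by its small-argument (high-SNR) behaviour, then integrate the resulting monomial over $[\rho_L,\rho_H]$. The key observation is that in the high-SNR regime the average SNR $\bar\gamma_2^{\rm RS}(\theta)$ grows large, so $\vartheta_2(\theta)=m_2\sum_{n}\lambda_n/\bar\gamma_2^{\rm RS}(\theta)\to 0$; since the integration window $[\rho_L,\rho_H]$ is fixed (it depends only on $R$ and $L$), the argument $x\vartheta_2/\lambda_n$ inside each factor of \eqref{lem2xy} is uniformly small on the integration interval. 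Hence I would expand each factor
\[
1 - e^{-x\vartheta_2/\lambda_n}\sum_{k=0}^{m_2-1}\frac{1}{k!}\Bigl(\frac{x\vartheta_2}{\lambda_n}\Bigr)^k
\]
for small $x\vartheta_2/\lambda_n$. This is the standard tail expansion of the normalized incomplete Gamma function: the first $m_2$ terms of the exponential series cancel against the polynomial sum, and the leading surviving term is $\frac{1}{m_2!}(x\vartheta_2/\lambda_n)^{m_2}$. Thus each factor behaves like $\frac{(x\vartheta_2)^{m_2}}{\Gamma(m_2+1)}\lambda_n^{-m_2}$ to leading order.

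Next I would multiply the $N_{\rm eff}$ leading terms together, which gives
\[
F_{\gamma_2^{\rm RS}}(x)\approx \Bigl(\frac{\vartheta_2^{m_2}}{\Gamma(m_2+1)}\Bigr)^{N_{\rm eff}}\Bigl(\prod_{n=1}^{N_{\rm eff}}\lambda_n^{-m_2}\Bigr)\,x^{m_2 N_{\rm eff}},
\]
with the next-order correction of order $x^{m_2 N_{\rm eff}+1}$ (or higher), which is negligible as $\vartheta_2\to 0$. Substituting this into \eqref{epaprir}, i.e. $\bar\epsilon_2^{\rm RS}\approx\chi\int_{\rho_L}^{\rho_H}F_{\gamma_2^{\rm RS}}(x)\,dx$, the integral reduces to $\int_{\rho_L}^{\rho_H}x^{m_2 N_{\rm eff}}\,dx=(\rho_H^{m_2 N_{\rm eff}+1}-\rho_L^{m_2 N_{\rm eff}+1})/(m_2 N_{\rm eff}+1)$, and collecting the constants yields exactly the claimed expression. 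Equivalently, one can reach the same result from the exact form \eqref{eq:bler_final_xscol}: as $\vartheta_2\to 0$ the term $b_{\mathcal S}\to 0$ for every $\mathcal S$, so each $\mathcal{G}(\cdot;a,b_{\mathcal S})$ can be Taylor-expanded in $b_{\mathcal S}$; the dominant contribution comes from the full set $\mathcal S=\{1,\dots,N_{\rm eff}\}$ together with the lowest monomial $x^{m_2 N_{\rm eff}}$ (coefficient $\prod_n (\vartheta_2/\lambda_n)^{m_2}/m_2!^{\,N_{\rm eff}}$) in $P_{\mathcal S}(x)$, and all lower-cardinality subsets and all lower powers of $x$ contribute strictly higher-order terms in $\vartheta_2$.

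The main obstacle — and the step that deserves care — is justifying that the leading monomial genuinely dominates after the alternating inclusion–exclusion sum: one must check that no cancellation among the $2^{N_{\rm eff}}$ subset terms wipes out the $O(\vartheta_2^{m_2 N_{\rm eff}})$ contribution, and that the ``diversity order'' $m_2 N_{\rm eff}$ is indeed the smallest exponent of $\vartheta_2$ appearing. This follows because each factor $(1-A_n(x))$ is itself $\Theta(\vartheta_2^{m_2})$ with a strictly positive leading coefficient, so their product is $\Theta(\vartheta_2^{m_2 N_{\rm eff}})$ with positive leading coefficient and no cancellation is possible; the inclusion–exclusion expansion merely reorganizes this product. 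A secondary technical point is that the piecewise-linear Q-function approximation underlying \eqref{epaprir} must remain valid, i.e. the bulk of the probability mass for $\gamma_2^{\rm RS}$ lies below $\rho_L$ in the high-SNR regime, which is precisely why $F_{\gamma_2^{\rm RS}}(x)$ evaluated on $[\rho_L,\rho_H]$ is small and the expansion is legitimate.
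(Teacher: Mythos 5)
Your proposal is correct and follows essentially the same route as the paper's Appendix~\ref{prooftheoreapd1}: the small-argument expansion of each factor (equivalently $\gamma(m_2,z)\approx z^{m_2}/m_2$ as $\vartheta_2\to 0$), yielding $F_{\gamma_2^{\rm RS}}(x)\approx \mathcal{C}\,x^{m_2 N_{\rm eff}}$, followed by integration of the monomial over the fixed interval $[\rho_L,\rho_H]$. Your additional remarks on the inclusion--exclusion form and the absence of cancellation are a sound supplement but not needed beyond the paper's argument.
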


\begin{proof}
See Appendix \ref{prooftheoreapd1}.
\end{proof}

\begin{remark}
Theorem \ref{theoreapd1} provides a crucial insight into the system's high-SNR behavior, revealing that the diversity order of the FAS-enabled link is precisely $m_2 N_{\text{eff}}$. This result quantitatively demonstrates that the performance is jointly determined by the small-scale fading severity $m_2$ and the spatial degrees of freedom harvested by the FAS $N_{\text{eff}}$.
\end{remark}

Fixing the UAV position, the average BLER via \eqref{eq:bler_final_double_col} and \eqref{eq:bler_final_xscol} can be given by
\begin{equation}
\overline{\epsilon}_{T}^{\rm RS}(\theta) = 1 - (1-\overline{\epsilon}_{1}^{\rm RS}(\theta))(1-\overline{\epsilon}_{2}^{\rm RS}(\theta)).
\end{equation}
Under our model, as the UAV traverses a circular trajectory of radius $r$ at altitude $Z_3$, the quantity $\bar{\epsilon}_T^{\rm RS}$ depends on the angular parameter $\theta$. Averaging with respect to the heading angle with uniform distribution yields
\begin{equation}
\bar{\epsilon}_O^{\rm RS}
= \int_{0}^{2\pi}\!\bar{\epsilon}_T^{\rm RS}(\theta)\,f_\theta(\theta)\,{\rm d}\theta
= \frac{1}{2\pi}\int_{0}^{2\pi}\!\bar{\epsilon}_T^{\rm RS}(\theta)\,{\rm d}\theta,
\end{equation}
where $f_\theta(\theta)=\frac{1}{2\pi}$ is the density of $\theta$ on $[0,2\pi)$. Since $\bar{\epsilon}_T^{\text{RS}}(\theta)$ does not admit a closed-form integration, we approximate it by $M$-point Gauss-Chebyshev quadrature (GCQ) as
\begin{equation}
\bar{\epsilon}_O^{\rm RS}\;\approx\;\frac{\pi}{M}\sum_{m=1}^{M} w_m\left[\bar{\epsilon}_1^{\rm RS}(\theta_m)
+\left(1-\bar{\epsilon}_1^{\rm RS}(\theta_m)\right)\bar{\epsilon}_2^{\rm RS}(\theta_m)\right],
\end{equation}
where the nodes and weights for Chebyshev quadrature on the interval $[-1, 1]$ are given by
\begin{equation}\label{overbm}
\left\{\begin{aligned}
 \theta_m &= \pi \,x_m + \pi,\\
 w_m &= \frac{\pi}{M} \, \frac{1}{\sqrt{1-x_m^2}},~ m = 1, \ldots, M,
\end{aligned}\right.
\end{equation}
with the nodes $x_m$ as the Chebyshev roots on $[-1, 1]$.

\subsection{Urban Scenario}
Under the urban scenario, the instantaneous BLER $\bar{\epsilon}^{k}_i$ can be given by
\begin{equation}
\epsilon_i^k\approx Q\left(\frac{C(\gamma^k_i)-R}{\sqrt{Z(\gamma^k_i)/L}}\right).
\end{equation}
Considering a certain UAV location, the average BLER for the $k$-link in phase $i$ can be written as
\begin{equation}
\bar{\epsilon}_i^k\approx \int_0^\infty Q\left(\frac{C(\gamma^k_i)-R}{\sqrt{Z(\gamma^k_i)/L}}\right) f_{\gamma_i^k}(x)dx,
\end{equation}
where $f_{\gamma_i^k}(x)$ is the PDF of $\gamma_i^k$.

To address this problem, we follow the approach outlined in \eqref{epaprir} in the rural scenario. Next, we use Lemmas \ref{lems3} and \ref{lems4} to find the CDFs of the first and second phases.

\begin{lemma}\label{lems3}
For the first hop, the CDF of the instantaneous SNR $\gamma_1^k$ for a link type $k \in \{\mathrm{LoS}, \mathrm{NLoS}\}$ is given by
\begin{equation}
F_{\gamma_{1}^{k}}(x)=1-e^{-x\vartheta_{1}^{k}}\sum\nolimits_{j=0}^{m_1-1}\frac{(x\vartheta_{1}^{k})^{j}}{j!},
\end{equation}
where we have the parameter $\vartheta_1^k(\theta) \triangleq m_1 / \bar{\gamma}_1^k(\theta)$.
\end{lemma}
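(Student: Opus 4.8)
The plan is to obtain $F_{\gamma_1^k}(x)$ by the same route used for Lemma~\ref{lema1}, since the first-hop model in the urban case differs from the rural one only through the link-type-dependent large-scale fading $\beta_1^k$ (and hence $\bar\gamma_1^k(\theta)$); the small-scale statistics of the BS--UAV channel are unchanged. Concretely, I would start from the signal model $\gamma_1^k = \bar\gamma_1^k(\theta)\,|g^k|^2$, in which, for a fixed $\theta$ and a fixed link type $k$, the factor $\bar\gamma_1^k(\theta)$ is deterministic and all randomness resides in the Nakagami-$m$ power gain $|g^k|^2$.

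Next, I would use the fact that a Nakagami-$m$ envelope with severity parameter $m_1$ and unit-mean power induces a Gamma-distributed power $|g^k|^2$ with shape $m_1$ and scale $1/m_1$. Writing $F_{\gamma_1^k}(x) = \Pr\{|g^k|^2 \le x/\bar\gamma_1^k(\theta)\}$ and invoking the Erlang-type series expansion of the Gamma CDF valid for integer shape $m_1$, namely $\Pr\{|g^k|^2 \le y\} = 1 - e^{-m_1 y}\sum_{j=0}^{m_1-1}(m_1 y)^j/j!$, I would substitute $y = x/\bar\gamma_1^k(\theta)$ and identify the grouped constant $\vartheta_1^k(\theta) \triangleq m_1/\bar\gamma_1^k(\theta)$ to reach the stated closed form.

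I do not expect a real obstacle here: the only point that must be handled with care is the unit-mean normalization of the Nakagami-$m$ power, which pins the Gamma scale at $1/m_1$ and is precisely why the exponent and the summand carry $\vartheta_1^k = m_1/\bar\gamma_1^k$ rather than $1/\bar\gamma_1^k$. The finite (rather than incomplete-gamma) form of the sum relies on $m_1$ being a positive integer, which is the standing assumption; a noninteger $m_1$ would instead leave a regularized lower incomplete gamma function $\gamma(m_1, x\vartheta_1^k)/\Gamma(m_1)$, so the clean expression in the lemma is intrinsically tied to the integer-$m_1$ setting. In the write-up I would simply note that the derivation is identical to that of Lemma~\ref{lema1} (Appendix~\ref{plema1}) with $\beta_1$ replaced by $\beta_1^k$, and state the result for each $k\in\{\mathrm{LoS},\mathrm{NLoS}\}$.
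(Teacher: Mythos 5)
Your proposal is correct and follows essentially the same route as the paper: the paper's own proof of Lemma~\ref{lems3} simply states that it repeats the steps of Appendix~\ref{plema1} with $\beta_1$ replaced by $\beta_1^k(\theta)$, which is exactly the Gamma-power / integer-$m_1$ Erlang-CDF argument you give, including the scaling $y = x/\bar\gamma_1^k(\theta)$ and the identification $\vartheta_1^k = m_1/\bar\gamma_1^k(\theta)$. Your remark on the unit-mean normalization and the integer-$m_1$ requirement is consistent with (and slightly more explicit than) the paper's treatment.
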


\begin{proof}
The proof follows the same steps as in Appendix \ref{plema1}, with the rural path loss $\beta_1$ replaced by $\beta_{1}^{k}(\theta)$.
\end{proof}

\begin{lemma}\label{lems4}
For the FAS-enabled second hop, the CDF of the instantaneous SNR $\gamma_2^k$ for a link type $k \in \{\mathrm{LoS}, \mathrm{NLoS}\}$ is
\begin{equation}\label{ga2kxl4}
F_{\gamma_{2}^{k}}(x)=\prod_{n=1}^{{N_{\rm eff}} }\left(1-e^{-\frac{x\vartheta_{2}^k}{\lambda_{n}}}\sum_{j=0}^{m_2^k-1}\frac{1}{j!}\left(\frac{x\vartheta_{2}^k}{\lambda_{n}}\right)^{j}\right),
\end{equation}
where $m_2^k$ is the Nakagami-m parameter for the corresponding link type and $\vartheta_{2}^{k}(\theta) \triangleq m_2^k \sum_{n=1}^N \lambda_n / \bar{\gamma}_2^k(\theta)$.
\end{lemma}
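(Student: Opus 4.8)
The plan is to mirror the derivation of Lemma~\ref{lema2} given in Appendix~\ref{plema2}, since the only structural changes in the urban setting are that the second-hop fading severity becomes the link-type-dependent integer $m_2^k$ and the average SNR is $\bar{\gamma}_2^k(\theta)$ instead of $\bar{\gamma}_2^{\rm RS}(\theta)$. First I would start from the definition $\gamma_2^k(\theta)=\bar{\gamma}_2^k(\theta)\,|h_{\rm FAS}^k|^2\big/\sum_{n=1}^{N_{\rm eff}}\lambda_n$ together with the analytical model in \eqref{eq:analytical_model_def}, so that $F_{\gamma_2^k}(x)=\Pr\!\left[|h_{\rm FAS}^k|^2\le x\sum_{n=1}^{N_{\rm eff}}\lambda_n\big/\bar{\gamma}_2^k(\theta)\right]$, reducing the task to the CDF of the selected channel power gain evaluated at a scaled argument.

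Next I would exploit the fact that $|h_{\rm FAS}^k|^2=\max_n\{\lambda_n|g_n^k|^2\}$ is a maximum of mutually independent random variables, so the event $\{|h_{\rm FAS}^k|^2\le t\}$ is the intersection of the independent events $\{\lambda_n|g_n^k|^2\le t\}$ and the joint CDF factorizes as $\prod_{n=1}^{N_{\rm eff}}\Pr[|g_n^k|^2\le t/\lambda_n]$. Since each $|g_n^k|^2$ is a normalized Gamma (unit-mean) variate with integer shape $m_2^k$, its CDF admits the finite-sum Erlang form $1-e^{-m_2^k u}\sum_{j=0}^{m_2^k-1}(m_2^k u)^j/j!$, which follows from the regularized lower incomplete gamma function via the identity $\gamma(n,z)/\Gamma(n)=1-e^{-z}\sum_{j=0}^{n-1}z^j/j!$ already invoked in the proof of Lemma~\ref{lema1}.

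Finally I would substitute $t=x\sum_{n=1}^{N_{\rm eff}}\lambda_n\big/\bar{\gamma}_2^k(\theta)$ into each factor, so that the argument of the $n$-th term becomes $m_2^k t/\lambda_n=x\vartheta_2^k/\lambda_n$ with $\vartheta_2^k(\theta)=m_2^k\sum_{n=1}^{N_{\rm eff}}\lambda_n\big/\bar{\gamma}_2^k(\theta)$, which gives exactly \eqref{ga2kxl4}. I do not anticipate a genuine obstacle: the argument is routine once Lemma~\ref{lema2} is available, and the only point that needs care is the consistent normalization of the per-branch Gamma variates (unit mean, hence rate $m_2^k$) so that the $\sum_n\lambda_n$ normalization factors cancel and $\vartheta_2^k$ emerges in the stated form. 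Accordingly, the proof can be stated compactly by pointing to Appendix~\ref{plema2} and noting these substitutions, in the same spirit as the short proof already given for Lemma~\ref{lems3}.
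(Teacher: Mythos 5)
Your proposal is correct and follows essentially the same route as the paper: the paper's proof of this lemma is exactly a pointer to Appendix~\ref{plema2} with $m_2^k$ and the urban path loss $\beta_2^k(\theta)$ substituted, i.e., factorize the CDF of the independent maximum $\max_n\{\lambda_n|g_n^k|^2\}$, use the integer-shape (Erlang) form of the unit-mean Gamma CDF, and rescale so that $\vartheta_2^k(\theta)=m_2^k\sum_n\lambda_n/\bar{\gamma}_2^k(\theta)$ appears. Your handling of the normalization (rate $m_2^k$, cancellation of the $\sum_n\lambda_n$ factor) matches the paper's simplification of $\vartheta_2$ in \eqref{eq:theta_derivation}, so no gap remains.
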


\begin{proof}
The proof follows the same steps as in Appendix \ref{plema2}, using the appropriate Nakagami-$m$ parameter $m_2^k$ and urban path loss $\beta_{2}^{k}(\theta)$ for each case.
\end{proof}

By substituting the CDF from Lemma~\ref{lems3} into the integral form of the average BLER, we arrive at the closed-form expressions for the first and second hops as follows.

\subsubsection{First-Hop Average BLER}
For the first hop, the average BLER for a link type $k \in \{\mathrm{LoS}, \mathrm{NLoS}\}$, denoted $\bar{\epsilon}_{1}^{k}(\theta)$, is found by solving the integral. Similar to~\eqref{eq:bler_final_double_col}, we have (\ref{eq:urban_hop1_k}) (see top of next page),
\begin{figure*}
\begin{equation}\label{eq:urban_hop1_k}
\overline{\epsilon}_{1}^{k}(\theta) \approx \chi \left[(\rho_H - \rho_L) - \frac{1}{\vartheta_1^k} \sum_{j=0}^{m_1-1} \left( e^{-\rho_L\vartheta_1^k}\sum_{l=0}^{j}\frac{(\rho_L\vartheta_1^k)^l}{l!}- e^{-\rho_H\vartheta_1^k}\sum_{l=0}^{j}\frac{(\rho_H\vartheta_1^k)^l}{l!} \right)\right]
\end{equation}
\hrulefill
\end{figure*}
where the parameter $\vartheta_{1}^{k}(\theta)$ is defined in Lemma~\ref{lems3}. The total average BLER for the first hop is then the expectation over the link probabilities
\begin{equation}
    \bar{\epsilon}_{1}^{\mathrm{US}}(\theta) = \bar{\epsilon}_{1}^{\mathrm{LoS}}(\theta)P_{1}^{\mathrm{LoS}}(\theta) + \bar{\epsilon}_{1}^{\mathrm{NLoS}}(\theta)P_{1}^{\mathrm{NLoS}}(\theta). \label{eq:urban_hop1_final}
\end{equation}

\subsubsection{Second-Hop Average BLER}
For the FAS-enabled second hop, the derivation for each link type follows the intricate process outlined for the rural case, which involves the principle of inclusion-exclusion. This results in
\begin{multline}\label{eq:urban_hop2_k}
\overline{\epsilon}_{2}^{k}(\theta)\approx{} \chi(\rho_H - \rho_L) +\\
\chi \sum_{\substack{\mathcal{S} \subseteq \{1, \dots, N_{\rm eff}\} \\
\mathcal{S} \neq \emptyset}} (-1)^{|\mathcal{S}|} \times\\
\sum_{a=0}^{|\mathcal{S}|(m_2^k-1)} c_a^k(\mathcal{S}) \left[ \mathcal{G}(\rho_H; a, b_{\mathcal{S}}^k) - \mathcal{G}(\rho_L; a, b_{\mathcal{S}}^k) \right],
\end{multline}
where $c_a^k(\mathcal{S})$ and $b_{\mathcal{S}}^k$ are calculated based on the link-specific parameter $\vartheta_{2}^{k}(\theta)$ and $m_2^k$ as defined in Lemma~\ref{lems4}.

In the high SNR regime, we can obtain a highly accurate approximation to yield more useful insights.

\begin{theorem}\label{theo2prd}
In the high SNR regime, the average BLER of the second hop for a given link type $k \in \{\mathrm{LoS}, \mathrm{NLoS}\}$, denoted $\overline{\epsilon}_{2}^{k}$, can be approximated by
\begin{multline}
\bar{\epsilon}_2^k \approx \frac{\chi \left( \rho_H^{m_2 {N_{\rm eff}}  + 1} - \rho_L^{m_2 {N_{\rm eff}}  + 1} \right)}{m_2 {N_{\rm eff}}  + 1} \\
\times\left(\frac{\vartheta_2^{m_2}}{\Gamma(m+1)}\right)^{{N_{\rm eff}} } \prod_{n=1}^{{N_{\rm eff}} } \lambda_n^{-m_2}.
\end{multline}
\end{theorem}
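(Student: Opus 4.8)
The plan is to mirror exactly the high-SNR argument already established for the rural scenario in Theorem~\ref{theoreapd1}, since Lemma~\ref{lems4} gives the CDF of $\gamma_2^k$ in precisely the same product form as Lemma~\ref{lema2}, only with $m_2$ replaced by the link-specific parameter $m_2^k$ and $\vartheta_2$ replaced by $\vartheta_2^k(\theta)$. First I would recall from \eqref{ga2kxl4} that each factor in the product is $1 - A_n^k(x)$ where $A_n^k(x) = e^{-x\vartheta_2^k/\lambda_n}\sum_{j=0}^{m_2^k-1}\frac{1}{j!}(x\vartheta_2^k/\lambda_n)^j$ is the truncated exponential series, i.e.\ the upper incomplete-gamma ratio. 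The key observation is that in the high-SNR regime $\vartheta_2^k \to 0$, so $1 - A_n^k(x)$ behaves like the leading term of the series expansion of the regularized lower incomplete gamma function: $1 - A_n^k(x) = \frac{1}{\Gamma(m_2^k+1)}\left(\frac{x\vartheta_2^k}{\lambda_n}\right)^{m_2^k} + O\big((x\vartheta_2^k)^{m_2^k+1}\big)$, uniformly for $x\in[\rho_L,\rho_H]$.

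Next I would take the product over $n = 1,\dots,N_{\rm eff}$ of these leading terms, which collapses the $2^{N_{\rm eff}}$-term inclusion–exclusion sum \eqref{eq:urban_hop2_k} to its single dominant contribution (the full set $\mathcal{S} = \{1,\dots,N_{\rm eff}\}$ at the lowest monomial degree), giving an integrand proportional to $x^{m_2^k N_{\rm eff}}\,(\vartheta_2^k)^{m_2^k N_{\rm eff}}\,\big(\Gamma(m_2^k+1)\big)^{-N_{\rm eff}}\prod_{n=1}^{N_{\rm eff}}\lambda_n^{-m_2^k}$. Substituting this into \eqref{epaprir} (with the $k$-superscripted quantities) and integrating the monomial $\chi\int_{\rho_L}^{\rho_H} x^{m_2^k N_{\rm eff}}\,dx = \chi\big(\rho_H^{m_2^k N_{\rm eff}+1} - \rho_L^{m_2^k N_{\rm eff}+1}\big)/(m_2^k N_{\rm eff}+1)$ yields the claimed expression, with $m_2$ in the statement understood as $m_2^k$. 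This also exposes the diversity order $m_2^k N_{\rm eff}$ for link type $k$, paralleling the remark after Theorem~\ref{theoreapd1}.

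The main obstacle, such as it is, is justifying that retaining only the single lowest-order monomial is legitimate: one must verify that every other subset $\mathcal{S}\subsetneq\{1,\dots,N_{\rm eff}\}$ contributes a term of strictly higher order in $\vartheta_2^k$ (namely order $m_2^k|\mathcal{S}|$ with $|\mathcal{S}| < N_{\rm eff}$ is actually \emph{lower} order — so care is needed here), and that within the full set the cross terms of $P_{\mathcal{S}}(x)$ beyond the minimal-degree monomial are likewise subdominant. The resolution is that the CDF $F_{\gamma_2^k}$ is a \emph{product} of the $(1-A_n^k)$ factors, each of which is $\Theta\big((\vartheta_2^k)^{m_2^k}\big)$, so $F_{\gamma_2^k}(x) = \Theta\big((\vartheta_2^k)^{m_2^k N_{\rm eff}}\big)$ directly — there is no cancellation to worry about because the product is taken before any expansion, and the inclusion–exclusion form \eqref{eq:urban_hop2_k} is merely an algebraic rewriting whose net leading behavior must agree with that of the product. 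Thus the cleanest route is to bypass \eqref{eq:urban_hop2_k} entirely and work from \eqref{ga2kxl4}: expand each factor to leading order, multiply, integrate. Since all steps are identical to Appendix~\ref{prooftheoreapd1} after the substitution $m_2\mapsto m_2^k$, $\vartheta_2\mapsto\vartheta_2^k$, I would simply state this correspondence and refer to that appendix for the routine details.
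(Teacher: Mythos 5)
Your proposal is correct and takes essentially the same route as the paper, whose proof simply applies the methodology of Appendix~\ref{prooftheoreapd1} to the CDF in Lemma~\ref{lems4} with the link-specific parameters $m_2^k$ and $\vartheta_2^k$. Your decision to expand each factor of the product in \eqref{ga2kxl4} to leading order (bypassing the inclusion--exclusion form) and your reading of $m_2$ and $\Gamma(m+1)$ in the statement as $m_2^k$ and $\Gamma(m_2^k+1)$ are exactly what the paper intends.
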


\begin{proof}
The proof follows the same methodology as in Appendix \ref{prooftheoreapd1}, by applying the high-SNR approximation to the CDF in Lemma \ref{lems4} for a specific link type $k$.
\end{proof}

\begin{remark}
Theorem \ref{theo2prd} reveals that when a specific link condition (LoS or NLoS) is maintained, the FAS link behaves as a power-limited system, where increasing transmit power continuously reduces the BLER at a rate determined by the diversity order $m_2^k N_{\rm eff}$. This highlights the intrinsic capability of FAS to combat fading under a fixed channel type.
\end{remark}

Given the angle of UAV, the BLER for the second hop is subsequently given by
\begin{equation}\label{eq:urban_hop2_final}
\bar{\epsilon}_{2}^{\mathrm{US}}(\theta) = \bar{\epsilon}_{2}^{\mathrm{LoS}}(\theta)P_{2}^{\mathrm{LoS}}(\theta) + \bar{\epsilon}_{2}^{\mathrm{NLoS}}(\theta)P_{2}^{\mathrm{NLoS}}(\theta).
\end{equation}

\subsubsection{End-to-End BLER and Asymptotic Analysis}
The average BLER for each hop is now established for any given UAV location $\theta$. We proceed to combine these results to determine the overall system performance. The end-to-end average BLER for the urban scenario, $\bar{\epsilon}_{T}^{\mathrm{US}}(\theta)$, is obtained by substituting the hop-level results from~\eqref{eq:urban_hop1_final} and~\eqref{eq:urban_hop2_final} into the standard DF BLER formula, yielding
\begin{equation}
\begin{aligned}
\overline{\epsilon}_{T}^{\mathrm{US}}(\theta)
&=\bar{\epsilon}_{1}^{\mathrm{LoS}}(\theta)P_{1}^{\mathrm{LoS}}(\theta) + \bar{\epsilon}_{1}^{\mathrm{NLoS}}(\theta)P_{1}^{\mathrm{NLoS}}(\theta)\\
&+\bar{\epsilon}_{2}^{\mathrm{LoS}}(\theta)P_{2}^{\mathrm{LoS}}(\theta)+ \bar{\epsilon}_{2}^{\mathrm{NLoS}}(\theta)P_{2}^{\mathrm{NLoS}}(\theta)\\
&-[\bar{\epsilon}_{1}^{\mathrm{LoS}}(\theta)P_{1}^{\mathrm{LoS}}(\theta) + \bar{\epsilon}_{1}^{\mathrm{NLoS}}(\theta)P_{1}^{\mathrm{NLoS}}(\theta)]\\
&\times[\bar{\epsilon}_{2}^{\mathrm{LoS}}(\theta)P_{2}^{\mathrm{LoS}}(\theta)+ \bar{\epsilon}_{2}^{\mathrm{NLoS}}(\theta)P_{2}^{\mathrm{NLoS}}(\theta)].
\end{aligned}
\end{equation}
The overall average BLER, $\bar{\epsilon}_{O}^{\mathrm{US}}$, is then found by averaging $\bar{\epsilon}_{T}^{\mathrm{US}}(\theta)$ over the UAV's circular trajectory with respect to the uniform distribution of  $\theta$ as
\begin{equation}
\begin{aligned}
\overline{\epsilon}_{O}^{\mathrm{US}}
&= \frac{1}{2\pi} \int_{0}^{2\pi} \Big\{ \left[\overline{\epsilon}_{1}^{\mathrm{LoS}}(\theta)P_{1}^{\mathrm{LoS}}(\theta)+\overline{\epsilon}_{1}^{\mathrm{NLoS}}(\theta)P_{1}^{\mathrm{NLoS}}(\theta)\right] \\
& + \left[\overline{\epsilon}_{2}^{\mathrm{LoS}}(\theta)P_{2}^{\mathrm{LoS}}(\theta)+\overline{\epsilon}_{2}^{\mathrm{NLoS}}(\theta)P_{2}^{\mathrm{NLoS}}(\theta)\right] \\
&- \left[\overline{\epsilon}_{1}^{\mathrm{LoS}}(\theta)P_{1}^{\mathrm{LoS}}(\theta)+\overline{\epsilon}_{1}^{\mathrm{NLoS}}(\theta)P_{1}^{\mathrm{NLoS}}(\theta)\right]\\
&\times \left[\overline{\epsilon}_{2}^{\mathrm{LoS}}(\theta)P_{2}^{\mathrm{LoS}}(\theta)+\overline{\epsilon}_{2}^{\mathrm{NLoS}}(\theta)P_{2}^{\mathrm{NLoS}}(\theta)\right] \Big\} d\theta.
\end{aligned}
\end{equation}
As this integral does not admit a closed-form solution, again, we employ the $M$-point GCQ method for a precise numerical approximation as
\begin{equation}
\bar{\epsilon}_{O}^{\mathrm{US}}\;\approx\;\frac{\pi}{M}\sum_{m=1}^{M} w_m\, \bar{\epsilon}_{T}^{\mathrm{US}}(\theta_m),
\end{equation}
where the nodes $\theta_m$ and weights $w_m$ are defined in~\eqref{overbm}.

\begin{theorem}\label{theo:urban_system_asymp}
The end-to-end performance in the urban scenario is fundamentally limited by an error floor when the UAV transmit power is sufficiently large. Let $\mathbb{E}_{\theta}[\cdot]$ denote the expectation over the uniform distribution of the UAV's heading angle $\theta \in [0, 2\pi)$. For a fixed BS power $P_1$ and $P_2 \to \infty$, the overall average BLER converges to a floor determined solely by the performance of the first hop
\begin{align}
&\lim_{P_2 \to \infty} \overline{\epsilon}_{O}^{\mathrm{US}} = \frac{\int_{0}^{2\pi} \left[ \overline{\epsilon}_{1}^{\mathrm{LoS}}(\theta)P_{1}^{\mathrm{LoS}}(\theta) + \overline{\epsilon}_{1}^{\mathrm{NLoS}}(\theta)P_{1}^{\mathrm{NLoS}}(\theta) \right] d\theta}{2\pi} . \label{eq:corrected_floor}
\end{align}
\end{theorem}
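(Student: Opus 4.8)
The plan is to show that the second-hop contribution to the end-to-end BLER vanishes as $P_2\to\infty$ while the first-hop contribution is independent of $P_2$, so that the integrand $\overline{\epsilon}_T^{\mathrm{US}}(\theta)$ collapses pointwise (indeed uniformly) to its first-hop value, after which exchanging the limit with the trajectory average is immediate. Concretely, I would first write $\overline{\epsilon}_T^{\mathrm{US}}(\theta)=\overline{\epsilon}_1^{\mathrm{US}}(\theta)+\overline{\epsilon}_2^{\mathrm{US}}(\theta)-\overline{\epsilon}_1^{\mathrm{US}}(\theta)\overline{\epsilon}_2^{\mathrm{US}}(\theta)$ with $\overline{\epsilon}_2^{\mathrm{US}}(\theta)=\overline{\epsilon}_2^{\mathrm{LoS}}(\theta)P_2^{\mathrm{LoS}}(\theta)+\overline{\epsilon}_2^{\mathrm{NLoS}}(\theta)P_2^{\mathrm{NLoS}}(\theta)$, and note that the LoS/NLoS probabilities depend only on the geometry and satisfy $P_2^{k}(\theta)\in[0,1]$. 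Hence it suffices to prove $\overline{\epsilon}_2^{k}(\theta)\to 0$ as $P_2\to\infty$ for each link type $k\in\{\mathrm{LoS},\mathrm{NLoS}\}$ and each $\theta$.

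For this limit I would argue directly from $\overline{\epsilon}_2^{k}(\theta)\approx\chi\int_{\rho_L}^{\rho_H}F_{\gamma_2^{k}}(x)\,dx$ together with Lemma~\ref{lems4}. Since $\overline{\gamma}_2^{k}(\theta)=P_2\beta_2^{k}(\theta)\sum_{n}\lambda_n/\sigma^2\to\infty$, the parameter $\vartheta_2^{k}(\theta)=m_2^{k}\sum_{n}\lambda_n/\overline{\gamma}_2^{k}(\theta)\to 0$, so that for every fixed $x>0$ each factor $1-e^{-x\vartheta_2^{k}/\lambda_n}\sum_{j=0}^{m_2^{k}-1}\frac{1}{j!}(x\vartheta_2^{k}/\lambda_n)^{j}\to 1-1=0$, giving $F_{\gamma_2^{k}}(x)\to 0$ pointwise on $(\rho_L,\rho_H]$; as $0\le F_{\gamma_2^{k}}(x)\le 1$ on the bounded interval $[\rho_L,\rho_H]$, bounded convergence yields $\overline{\epsilon}_2^{k}(\theta)\to 0$. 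Equivalently, one may simply invoke the explicit high-SNR form of Theorem~\ref{theo2prd}, which already exhibits $\overline{\epsilon}_2^{k}(\theta)=\Theta\big((\vartheta_2^{k})^{m_2^{k}N_{\rm eff}}\big)=\Theta\big(P_2^{-m_2^{k}N_{\rm eff}}\big)$. Since $\beta_2^{k}(\cdot)$ is continuous and strictly positive on the compact circular trajectory, $\inf_{\theta}\beta_2^{k}(\theta)>0$, whence $\overline{\epsilon}_2^{k}(\theta)\to 0$ uniformly in $\theta$, and therefore $\overline{\epsilon}_2^{\mathrm{US}}(\theta)\to 0$ uniformly as well.

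Combining these facts, $\overline{\epsilon}_T^{\mathrm{US}}(\theta)\to\overline{\epsilon}_1^{\mathrm{US}}(\theta)=\overline{\epsilon}_1^{\mathrm{LoS}}(\theta)P_1^{\mathrm{LoS}}(\theta)+\overline{\epsilon}_1^{\mathrm{NLoS}}(\theta)P_1^{\mathrm{NLoS}}(\theta)$ uniformly in $\theta$, the cross term being dominated by $\overline{\epsilon}_2^{\mathrm{US}}(\theta)\to 0$. Because all integrands are bounded by $1$ on the finite-measure interval $[0,2\pi)$, the dominated (in fact uniform) convergence theorem permits interchanging $\lim_{P_2\to\infty}$ with $\frac{1}{2\pi}\int_0^{2\pi}(\cdot)\,d\theta$, which delivers exactly~\eqref{eq:corrected_floor}. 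The only genuinely delicate step is the justification of this interchange of limit and integral; I would handle it either with the crude uniform bound $\overline{\epsilon}_2^{k}(\theta)\le 1$ or, more sharply, with the uniform-in-$\theta$ decay $\sup_{\theta}\overline{\epsilon}_2^{k}(\theta)=O\big(P_2^{-m_2^{k}N_{\rm eff}}\big)$ coming from $\inf_{\theta}\beta_2^{k}(\theta)>0$. Everything else is routine bookkeeping.
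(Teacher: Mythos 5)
Your proposal is correct and follows essentially the same route as the paper's proof: with $P_1$ fixed the first-hop BLER is unaffected by $P_2$, the second-hop BLER vanishes as $P_2 \to \infty$, so the DF end-to-end BLER $1-(1-\overline{\epsilon}_1^{\mathrm{US}}(\theta))(1-\overline{\epsilon}_2^{\mathrm{US}}(\theta))$ collapses pointwise to $\overline{\epsilon}_1^{\mathrm{US}}(\theta)$, and averaging over $\theta$ gives the floor. The only difference is that you explicitly justify the two steps the paper merely asserts, namely $\overline{\epsilon}_2^{k}(\theta)\to 0$ via the CDF of Lemma~\ref{lems4} (or the asymptotics of Theorem~\ref{theo2prd}) and the interchange of the limit with the trajectory average via bounded/uniform convergence, which tightens but does not change the argument.
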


\begin{proof}
The proof follows from the asymptotic behavior of the two-hop DF relaying system. For the first hop, with fixed power $P_1$, the average BLER $\overline{\epsilon}_{1}^{\mathrm{US}}(\theta; P_1)$ remains constant regardless of $P_2$. For the second hop, as $P_2 \to \infty$, the transmit power becomes sufficient to overcome the path loss of both LoS and NLoS links. Thus, the instantaneous SNR $\gamma_2^k \to \infty$ for any channel state $k \in \{{\rm LoS},{\rm NLoS}\}$, which implies that the BLER of the second hop converges to zero, i.e., $\overline{\epsilon}_{2}^{\mathrm{US}}(\theta) \to 0$.

The end-to-end BLER for a given $\theta$ is given by
\begin{equation}
\overline{\epsilon}_{T}^{\mathrm{US}}(\theta) = 1 - (1 - \overline{\epsilon}_{1}^{\mathrm{US}}(\theta))(1 - \overline{\epsilon}_{2}^{\mathrm{US}}(\theta)).
\end{equation}
Taking the limit as $P_2 \to \infty$, we have
\begin{equation}
\lim_{P_2 \to \infty} \overline{\epsilon}_{T}^{\mathrm{US}}(\theta) = 1 - (1 - \overline{\epsilon}_{1}^{\mathrm{US}}(\theta))(1 - 0) = \overline{\epsilon}_{1}^{\mathrm{US}}(\theta).
\end{equation}
Averaging over the distribution of $\theta$ completes the proof.
\end{proof}

\begin{remark}
Theorem \ref{theo:urban_system_asymp} reveals a fundamental performance bottleneck in urban UAV relaying systems. The floor provides a crucial insight: for a system with a fixed backhaul (BS-to-UAV link), the overall performance is fundamentally limited by the reliability of that backhaul link. This implies that increasing the UAV's power ($P_2$) yields diminishing returns; once $P_2$ is sufficiently high to make the second hop reliable, any further increase in power is completely ineffective at improving the end-to-end BLER. Therefore, to improve performance beyond this floor, system optimization must focus on enhancing the first hop, for instance by increasing $P_1$ or improving the BS-to-UAV channel through trajectory planning.
\end{remark}

\section{EE Maximization}\label{sec:eemax}
In this section, we develop a framework to maximize the EE of the FAS-enabled UAV system under investigation. We formulate a realistic EE model that captures the fundamental trade-off between the diversity gain afforded by FAS and the operational cost incurred during port selection \cite{wong2021fluid, zhang20205ee}. While a larger number of ports $N$ enhances diversity, it also introduces time and energy overheads for channel estimation and switching, suggesting that an optimal $N$ exists.

\subsection{Problem Formulation}
To capture the trade-off, we define EE as the number of successfully delivered bits per unit of energy (bits/Joule). The total energy consumed by the UAV during one transmission block, $E_{\text{total}}$, is composed of three components: transmit energy, static circuit energy, and FAS switching energy.

The total block duration is denoted by $T_{\text{block}} = L/W_{\text{band}}$, where $L$ is the blocklength and $W_{\text{band}}$ is the system bandwidth. The port selection process for $N$ ports incurs a time overhead of $T_{\text{sw}}(N) = N\tau_p$, where $\tau_p$ is the processing time per port. Thus, the effective data transmission time is $T_{\text{tx}}(N) = T_{\text{block}} - T_{\text{sw}}(N)$. Let $P_2$, $P_c$, and $P_{\text{sw}}$ denote the UAV's transmit power, static circuit power, and the additional power consumed during port selection, respectively. We have the total energy
\begin{equation}
E_{\text{total}} = P_2 T_{\text{tx}}(N) + P_c T_{\text{block}} + P_{\text{sw}} T_{\text{sw}}(N).
\end{equation}
The number of successfully delivered bits is $B_{\text{succ}} = B(1 - \bar{\epsilon}_O)$. Therefore, the EE is formulated as
\begin{equation}\label{eq:ee_model}
\text{EE} = \frac{B(1 - \bar{\epsilon}_O)}{P_2(L/W_{\text{band}} \!-\! N\tau_p)\! + \!P_c(L/W_{\text{band}}) + P_{\text{sw}}N\tau_p},
\end{equation}
where $\bar{\epsilon}_O$ applies to both $\bar{\epsilon}_O^{\rm RS}$ and $\bar{\epsilon}_O^{\rm US}$.

Our objective is to maximize the EE by jointly optimizing the blocklength $L$, UAV altitude $Z_{\rm U}$, transmit power $P_2$, and the number of FAS ports $N$. The optimization problem can be formulated as
\begin{subequations}\label{eq:p1}
\begin{align}
({\rm P}1): \max_{L, Z_U, P_2, N} \quad & \text{EE}(L, Z_U, P_2, N) \\
\text{s.t.} \quad & \overline{\epsilon}_O(L, Z_U, P_2, N) \le \epsilon_{\text{th}}, \label{eq:p1_c1} \\
& 0 < P_2 \le P_{\text{max}}, \label{eq:p1_c2} \\
& Z_{\text{min}} \le Z_U \le Z_{\text{max}}, \label{eq:p1_c3} \\
& L_{\text{min}} \le L \le L_{\text{max}}, \label{eq:p1_c4} \\
& N_{\text{min}} \le N \le N_{\text{max}}, \label{eq:p1_c5} \\
& N\tau_p < L/W_{\text{band}}. \label{eq:p1_c6}
\end{align}
\end{subequations}
The constraint in \eqref{eq:p1_c1} ensures communication reliability, while \eqref{eq:p1_c2}--\eqref{eq:p1_c5} define the operational ranges of the variables, and \eqref{eq:p1_c6} is a crucial causality constraint.

\begin{algorithm}[t]
\caption{Bisection Algorithm for Minimum Power}\label{alg:bisection}
\begin{algorithmic}[1]
\State \textbf{Input:} Fixed parameters $\hat{L}, \hat{Z}_U, \hat{N}, P_{\max}$, target BLER $\epsilon_{\text{th}}$, tolerance $\delta$.
\State \textbf{Output:} Minimum required power $P_2^*$.
\State Initialize: $P_{\text{low}} \leftarrow P_{\min}$, $P_{\text{high}} \leftarrow P_{\max}$.
\If{$\overline{\epsilon}_{O}(P_{\max}; \hat{L}, \hat{Z}_U, \hat{N}) > \epsilon_{\text{th}}$}
    \State \Return Infeasible
\EndIf
\While{$(P_{\text{high}} - P_{\text{low}}) > \delta$}
    \State $P_{\text{mid}} \leftarrow (P_{\text{low}} + P_{\text{high}})/2$.
    \If{$\overline{\epsilon}_{O}(P_{\text{mid}}; \hat{L}, \hat{Z}_U, \hat{N}) > \epsilon_{\text{th}}$}
        \State $P_{\text{low}} \leftarrow P_{\text{mid}}$.
    \Else
        \State $P_{\text{high}} \leftarrow P_{\text{mid}}$.
    \EndIf
\EndWhile
\State \Return $P_{\text{high}}$.
\end{algorithmic}
\end{algorithm}

\begin{algorithm}[t]
\caption{Joint Optimization for Maximum EE}
\label{alg:joint_opt}
\begin{algorithmic}[1]
\State \textbf{Input:} Search ranges $[L_{\min}, L_{\max}]$, $[Z_{\min}, Z_{\max}]$, $[N_{\min}, N_{\max}]$; $P_{\max}, \epsilon_{\text{th}}$; $P_c, P_{sw}, \tau_p, W_{\text{band}}$.
\State \textbf{Output:} Optimal parameters $(L^*, Z_U^*, N^*, P_2^*)$ and $\text{EE}_{\max}$.
\State Initialize: $\text{EE}_{\max} \leftarrow 0$; $(L^*, Z_U^*, N^*, P_2^*) \leftarrow \text{null}$.
\For{each $L \in [L_{\min}, L_{\max}]$}
    \For{each $Z_U \in [Z_{\min}, Z_{\max}]$}
        \State $\text{EE}_{\text{cand}} \leftarrow 0$; $N_{\text{cand}} \leftarrow \text{null}$; $P_{2,\text{cand}} \leftarrow \text{null}$.
        \For{each $N \in [N_{\min}, N_{\max}]$}
            \If{$N\tau_p \ge L/W_{\text{band}}$} \textbf{continue}; \EndIf
            \State $p_{\text{curr}} \leftarrow \text{Algorithm \ref{alg:bisection}}(L, Z_U, N, P_{\max}, \epsilon_{\text{th}})$.
            \If{$p_{\text{curr}}$ is feasible}
                \State Calculate $\text{EE}_{\text{curr}}$ via \eqref{eq:ee_model} with $P_2 \leftarrow p_{\text{curr}}$.
                \If{$\text{EE}_{\text{curr}} > \text{EE}_{\text{cand}}$}
                    \State $\text{EE}_{\text{cand}} \leftarrow \text{EE}_{\text{curr}}$.
                    \State $N_{\text{cand}} \leftarrow N$; $P_{2,\text{cand}} \leftarrow p_{\text{curr}}$.
                \EndIf
            \EndIf
        \EndFor
        \If{$\text{EE}_{\text{cand}} > \text{EE}_{\max}$}
            \State $\text{EE}_{\max} \leftarrow \text{EE}_{\text{cand}}$.
            \State $(L^*, Z_U^*, N^*, P_2^*) \leftarrow (L, Z_U, N_{\text{cand}}, P_{2,\text{cand}})$.
        \EndIf
    \EndFor
\EndFor
\State \Return $(L^*, Z_U^*, N^*, P_2^*), \text{EE}_{\max}$.
\end{algorithmic}
\end{algorithm}

\subsection{Methodology}
Unfortunately, Problem (P1) is a non-convex mixed-integer nonlinear program due to the integer variable $N$ and the non-convex objective function and constraints with respect to the joint variables. Such problems are NP-hard and cannot be solved directly for the global optimum. We therefore propose a hierarchical decomposition approach that breaks (P1) into a sequence of more manageable subproblems.

\subsubsection{Transmit Power Minimization}
For any fixed set of parameters $(\hat{L}, \hat{Z}_U, \hat{N})$, the first subproblem is to find the minimum UAV transmit power $P_2^*$ that satisfies the reliability constraint. This is formulated as
\begin{subequations}
\begin{align}
({\rm P}1.1): \min_{P_2} & \quad P_2 \\
\mathrm{s.t.} & \quad \overline{\epsilon}_{O}(P_2; \hat{L}, \hat{Z}_U, \hat{N}) \le \epsilon_{\rm th}, \\
&\quad  0 < P_2 \le P_{\max}.
\end{align}
\end{subequations}
Leveraging the monotonic relationship between $P_2$ and the BLER, this subproblem can be efficiently solved via a bisection search, as detailed in Algorithm \ref{alg:bisection}.

\subsubsection{Optimal Port Number Determination}
For a fixed blocklength $\hat{L}$ and altitude $\hat{Z}_U$, we find the optimal number of ports $N^*$ that maximizes EE by
\begin{subequations}
\begin{align}
({\rm P}1.2): \max_{N} \quad & \text{EE}(N, P_2^*(N); \hat{L}, \hat{Z}_U) \\
\mathrm{s.t.} \quad & N_{\min} \le N \le N_{\max}, \\
& N\tau_p < \hat{L}/W_{\text{band}},
\end{align}
\end{subequations}
where $P_2^*(N)$ is obtained from solving (P1.1). Since $N$ is an integer, this is solved via a simple search method.

\subsubsection{Optimal Flying Height Determination}
For a fixed blocklength $\hat{L}$, the optimal altitude $Z_U^*$ is found by
\begin{subequations}
\begin{align}
({\rm P}1.3): \max_{Z_U} & \quad \text{EE}^*(Z_U; \hat{L}) \\
\mathrm{s.t.} & \quad  Z_{\min} \le Z_U \le Z_{\max},
\end{align}
\end{subequations}
where $\text{EE}^*(Z_U; \hat{L})$ is the maximum EE obtained from solving (P1.2). This is also solved via a simple search method.

\subsubsection{Overall Solution}
The final solution to (P1) is found by performing an exhaustive search over the blocklength $L$. For each candidate $L$, we solve (P1.3) to find the corresponding maximum EE. The overall optimal solution is the set of parameters that yields the highest EE across all considered blocklengths. This procedure is outlined in Algorithm \ref{alg:joint_opt}.

\begin{table}[t]
\caption{Unified Simulation Parameters for Rural and Urban Scenarios}\label{tab:sim_params_unified}
\centering
\scriptsize
\setlength{\tabcolsep}{3pt}
\renewcommand{\arraystretch}{1.12}

\resizebox{.9\columnwidth}{!}{
\begin{tabular}{|l|c|c|c|}
\hline
\textbf{Parameter} & \textbf{Symbol} & \textbf{Rural Scenario} & \textbf{Urban Scenario} \\ \hline\hline
Data bits per packet          & $F$                  & \multicolumn{2}{c|}{$80\ \text{bits}$} \\ \hline
Carrier frequency             & $f_c$                & \multicolumn{2}{c|}{$2.5\ \text{GHz}$} \\ \hline
Noise power                   & $\sigma^{2}$         & \multicolumn{2}{c|}{$-100\ \text{dBm}$} \\ \hline
System bandwidth              & $W_{\rm band}$                  & \multicolumn{2}{c|}{$10\ \text{MHz}$} \\ \hline
UAV flying radius             & $r$                  & \multicolumn{2}{c|}{$50\ \text{m}$} \\ \hline
Static circuit power             &$P_c$ &\multicolumn{2}{c|} {5 dBm} \\ \hline
FAS switching power            & $P_{\rm sw}$ &\multicolumn{2}{c|} {0 dBm} \\ \hline
Port processing time             & $\tau_p$                   &\multicolumn{2}{c|} {2 $\mu$s} \\ \hline
The number of ports             & $N$                   &\multicolumn{2}{c|} {2 } \\ \hline
Aperture of length             & $W$                   &\multicolumn{2}{c|}{0.5$\lambda$ }\\ \hline
BLER threshold            & $\epsilon_{\rm th}$                   &\multicolumn{2}{c|}{$\leq 10^{-3}$ }\\ \hline
BS coordinates                & $(X_B, Y_B, Z_B)$    & $(1000, 0, 40)\ \text{m}$  & $(100, 0, 40)\ \text{m}$ \\ \hline
UE coordinates                & $(X_E, Y_E, Z_E)$    & $(-1000, 1000, 0)\ \text{m}$ & $(-100, 100, 0)\ \text{m}$ \\ \hline
Nakagami--$m$ (Hop 1) & $m_1$               & $7$  & --- \\ \hline
Nakagami--$m$ (Hop 2) & $m_2$               & $7$  & --- \\ \hline
Additional LoS path loss      & $\eta_{\text{LoS}}$  & ---  & $1.6\ \text{dB}$ \\ \hline
Additional NLoS path loss     & $\eta_{\text{NLoS}}$ & ---  & $23\ \text{dB}$ \\ \hline
NLoS Nakagami--$m$ factor     & $m_{\text{NLoS}}$    & ---  & $1$ (Rayleigh) \\ \hline
\end{tabular}
}
\end{table}

\section{Numerical Results and Discussion}\label{sec:results}
In this section, we present numerical results to validate our analytical framework, quantify the performance gains of the FAS, and the proposed EE maximization. Unless specified otherwise, the simulation parameters are listed in Table \ref{tab:sim_params_unified}.

\begin{figure}[t]
\centering
\subfloat[Rural scenario]{%
\includegraphics[width=0.24\textwidth]{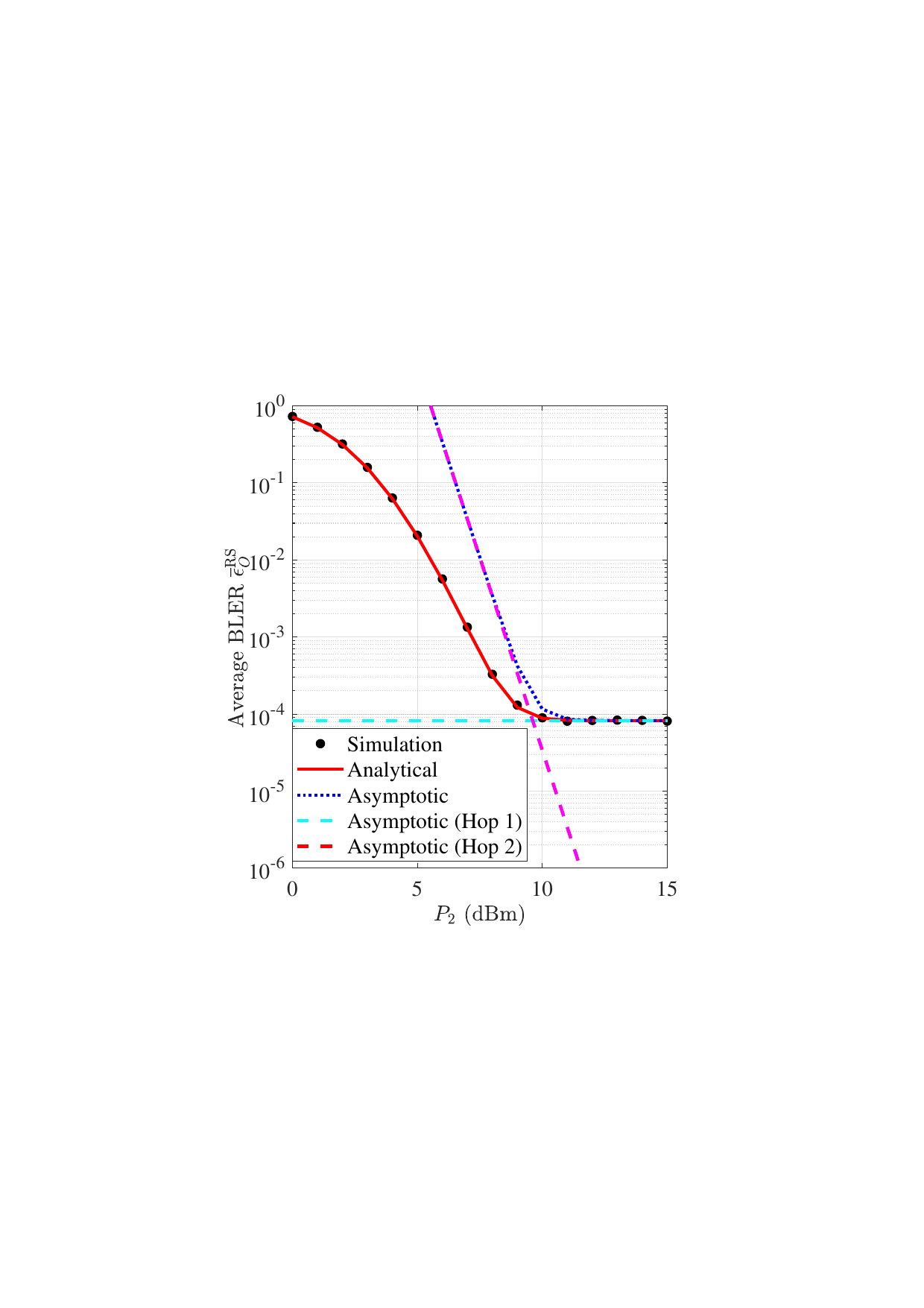}}
\hfil
\subfloat[Urban scenario]{%
\includegraphics[width=0.24\textwidth]{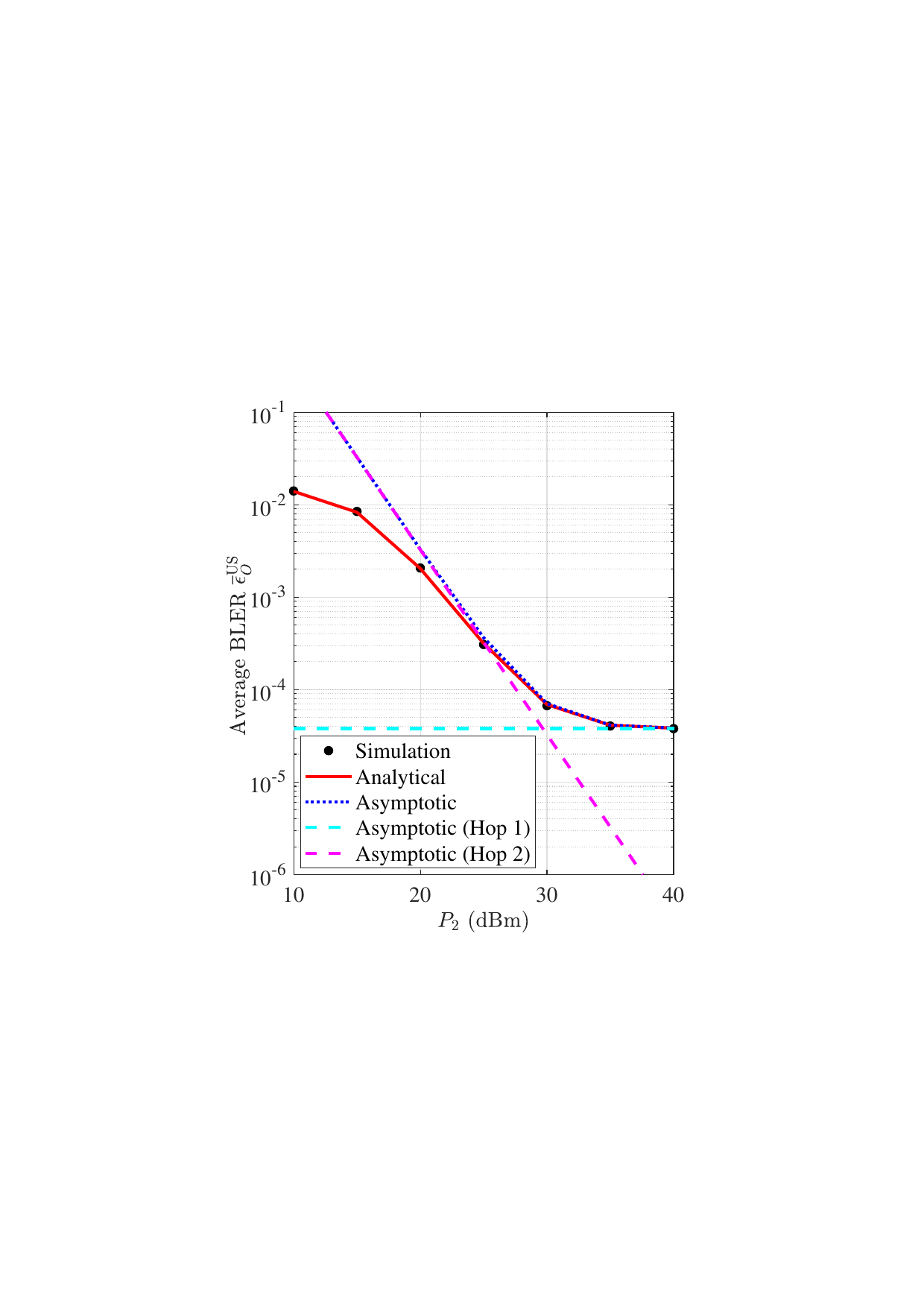}}
\caption{Validation of the analytical framework against Monte Carlo simulation of the analytical model. Parameters: $L=100, N=2, W=0.5\lambda$. (a) Rural scenario: $m_1=m_2=5, P_1=10$ dBm. (b) Urban scenario: $m_{\text{LoS}}=5, m_{\text{NLoS}}=1, P_1=40$ dBm.}\label{fig:validation}
\vspace{-3mm}
\end{figure}

We begin by validating our theoretical framework in Fig.~\ref{fig:validation}. The results in this figure compare our derived closed-form expressions for the end-to-end BLER  against Monte Carlo simulations of the underlying analytical model defined in Section \ref{ssec:model}. The perfect alignment between the curves and simulation points confirms the correctness and accuracy of our complex mathematical derivations, for both the rural and urban scenarios. Furthermore, the high-SNR asymptotic expressions are shown to capture the diversity slope in the high-SNR regime accurately. A notable phenomenon observable is the error floor occurred in both scenarios. This happens because as the UAV's transmit power $P_2$ increases, the second-hop BLER diminishes, causing the end-to-end BLER to converge to the BLER of the first hop, which acts as a performance bottleneck. This confirms that our framework correctly models the behavior of the UAV DF-relaying system.

\begin{figure}[t]
\centering
\subfloat[Rural scenario]{%
\includegraphics[width=0.24\textwidth]{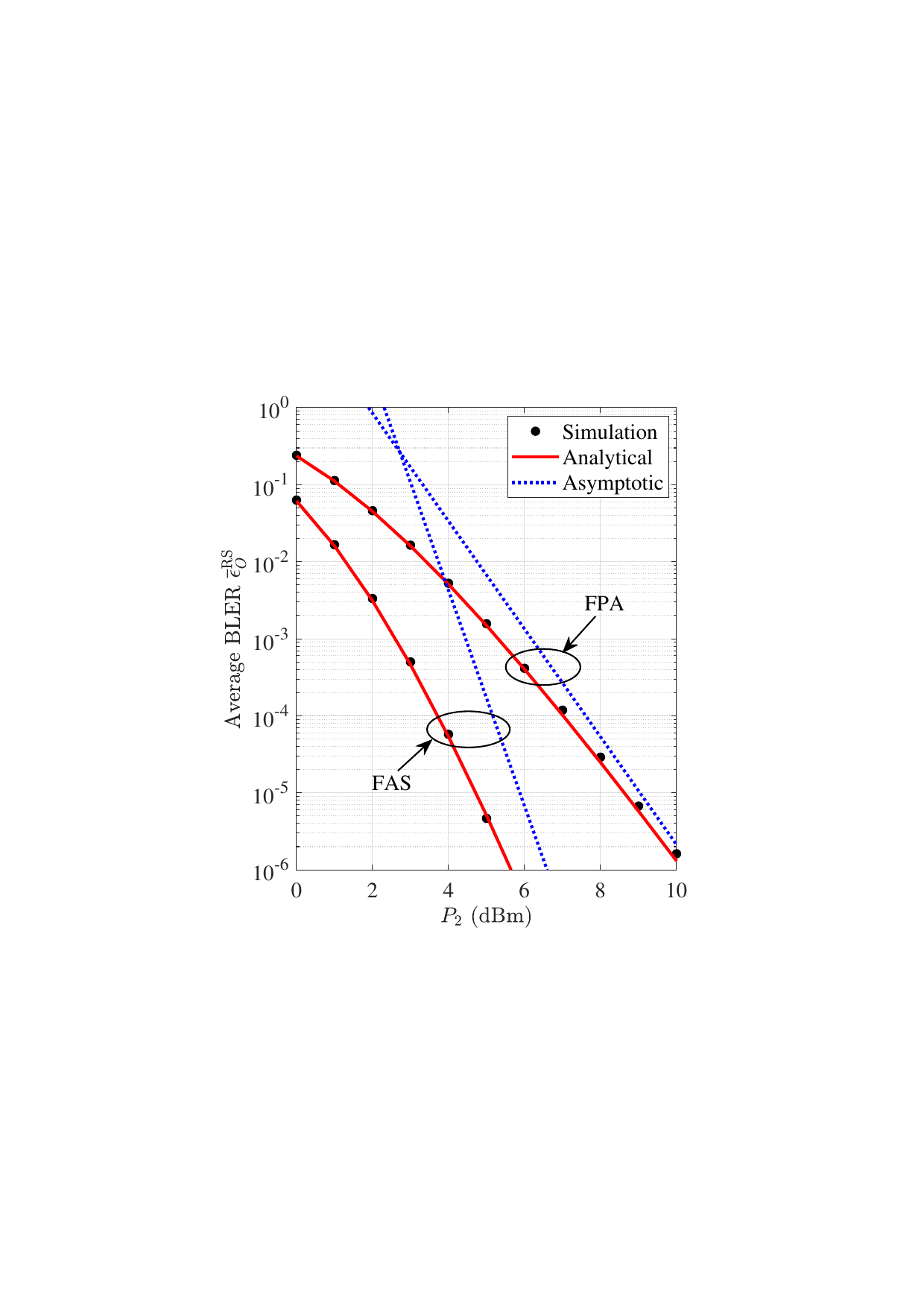}}
\hfil
\subfloat[Urban scenario]{%
\includegraphics[width=0.24\textwidth]{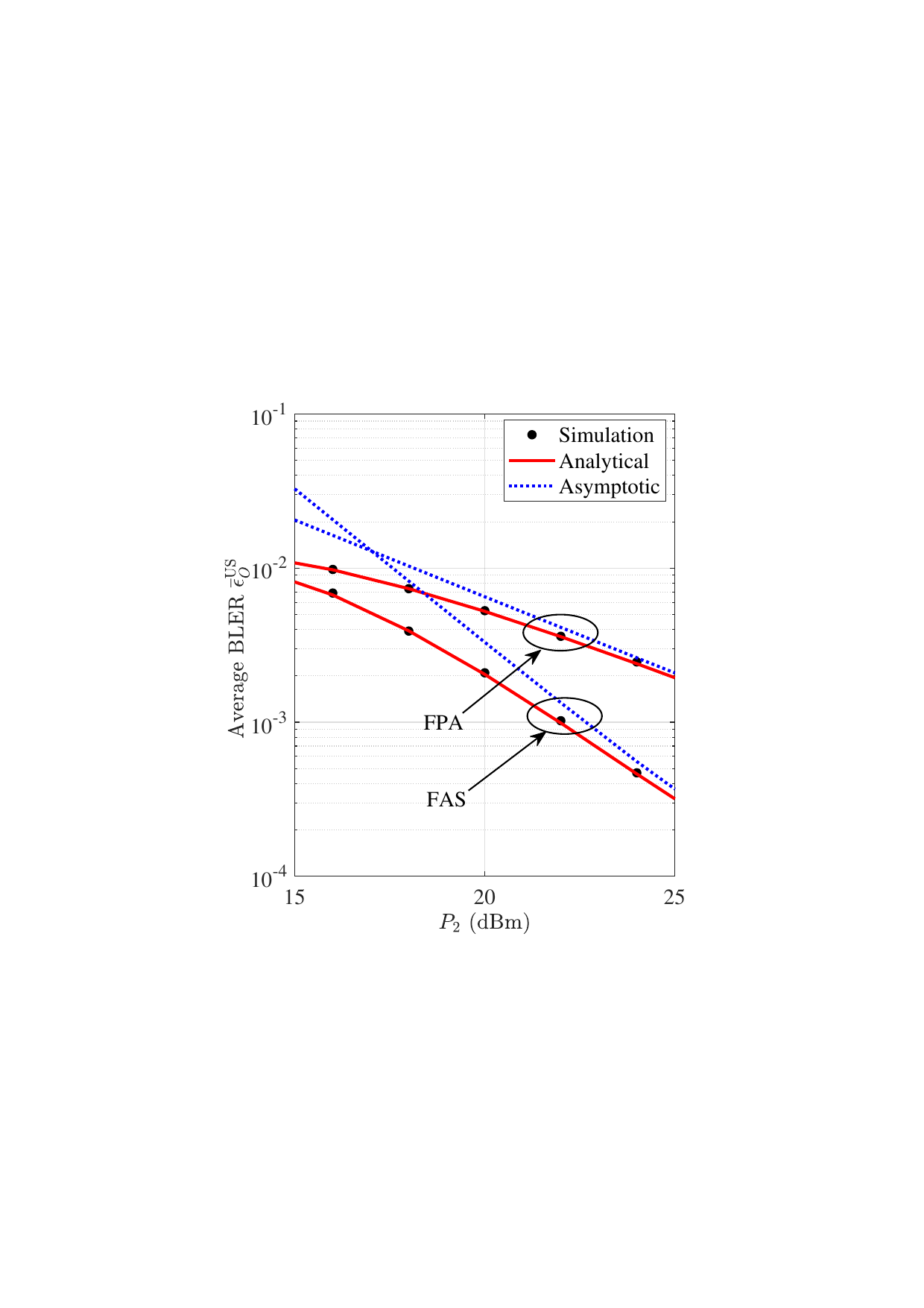}}
\caption{Performance comparison between FAS ($N=2$) and FPA ($N=1$). Parameters are identical to those in Fig.~\ref{fig:validation}.}\label{fig:fas_vs_fpa}
\vspace{-3mm}
\end{figure}

\begin{figure}[t]
\centering
\subfloat[Rural scenario ]{%
\includegraphics[width=0.24\textwidth]{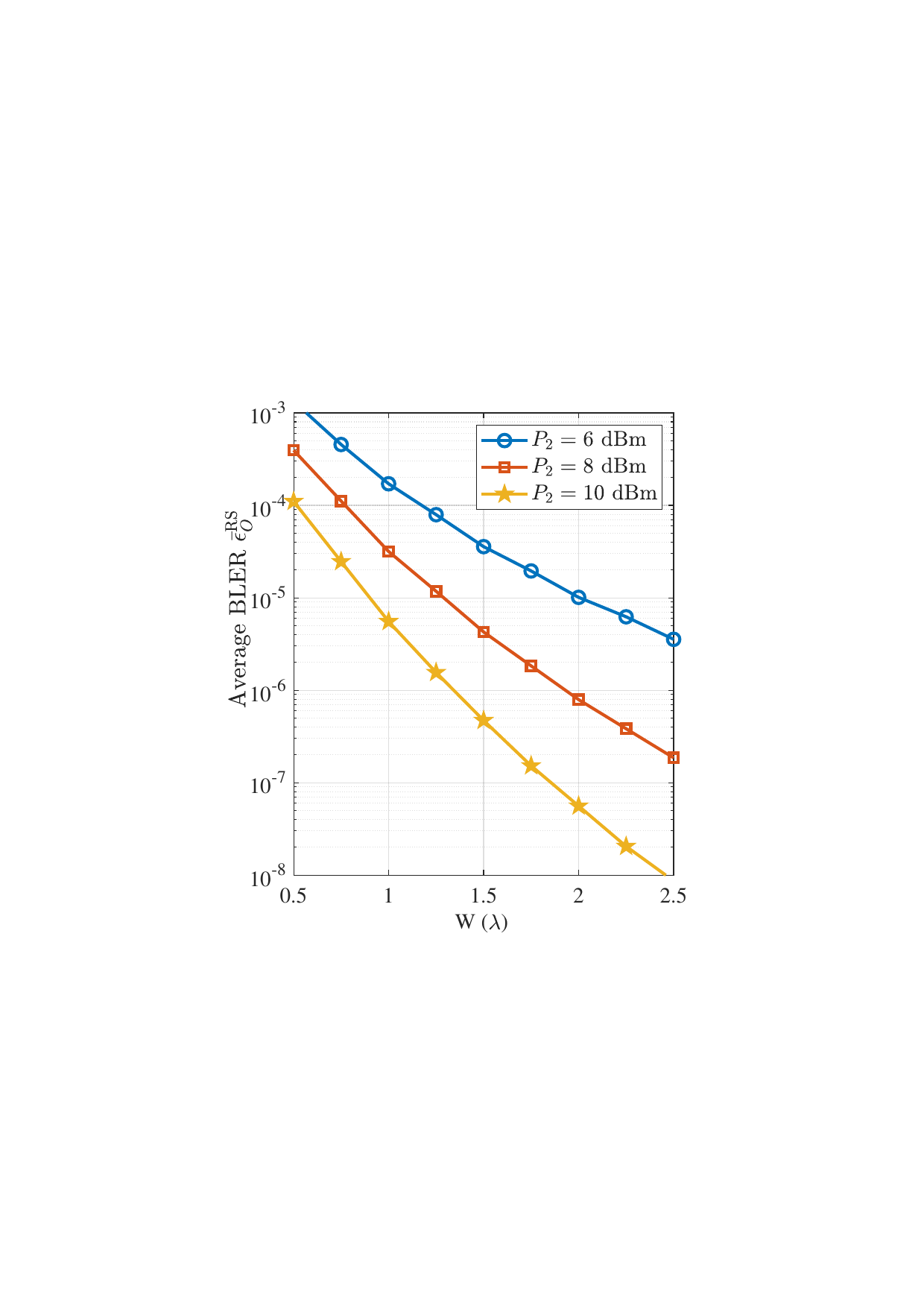}}
\hfil
\subfloat[Urban scenario ]{%
\includegraphics[width=0.24\textwidth]{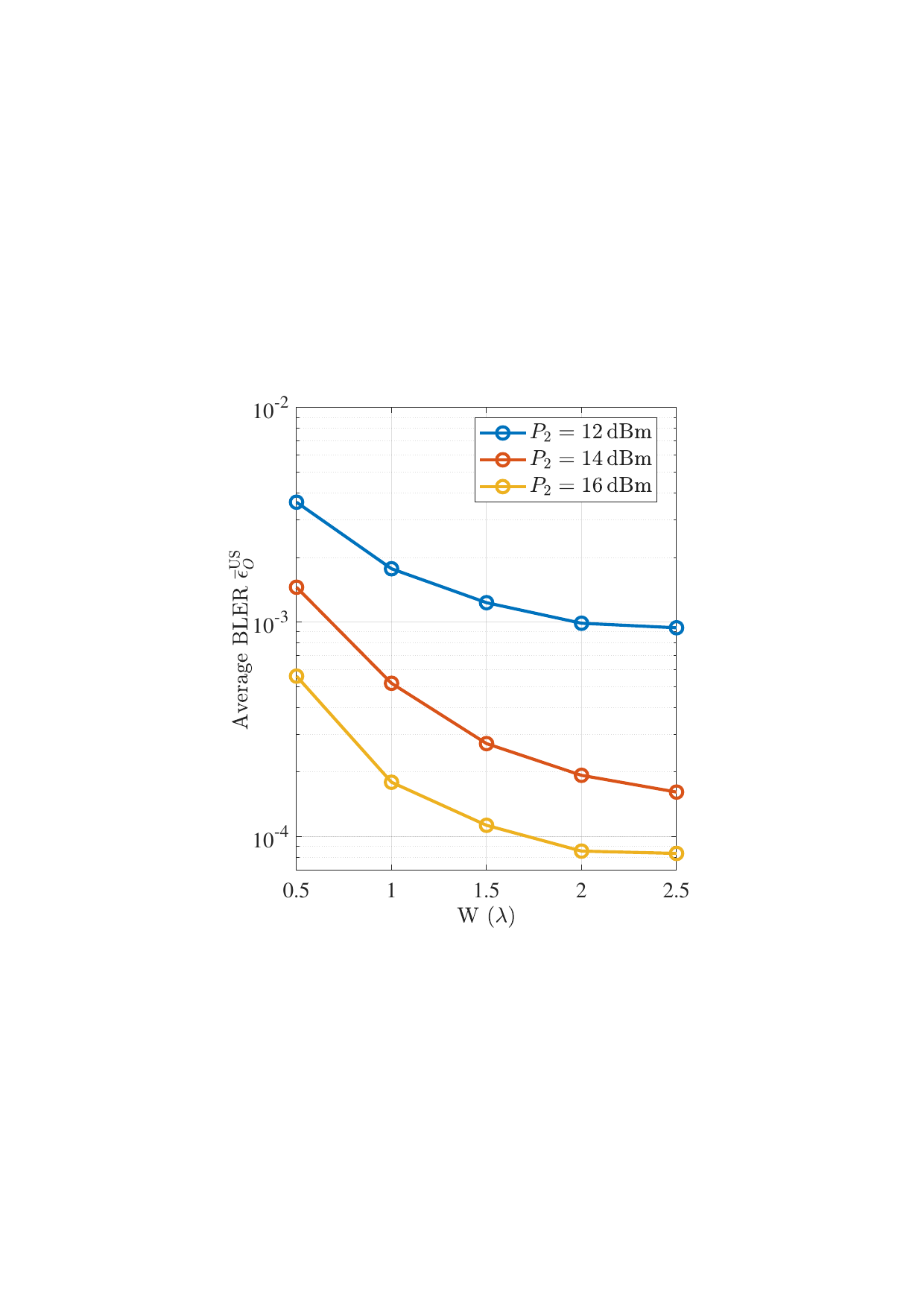}}
\caption{End-to-end BLER versus FAS aperture size $W$ for a fixed $N=8$. (a) Rural scenario: $L=200, m_1=5, m_2=7, P_1=15$ dBm. (b) Urban scenario: $L=100, m_{\text{LoS}}=5, m_{\text{NLoS}}=1, P_1=40$ dBm.}\label{fig:bler_vs_w}
\vspace{-3mm}
\end{figure}

Fig.~\ref{fig:fas_vs_fpa} compares the BLER of a 2-port FAS with a conventional FPA, which is equivalent to $N=1$. First, the figure clearly validates the accuracy of our theoretical derivations, as the Monte Carlo simulation results are in close agreement with our analytical results. Additionally, in the high SNR regime, the asymptotic analysis accurately reflects the system's performance trend. Second, the figure compares the performance of our proposed FAS scheme with the traditional FPA baseline. The steeper slope of the FAS curve confirms its higher diversity order, enabling it to better combat fading. The results demonstrate that due to the superior spatial diversity gain afforded by FAS, the proposed scheme significantly outperforms the FPA scheme in both scenarios. Finally, the figure visually illustrates the performance difference between rural and urban scenarios. In the rural (LoS-dominant) scenario, the system performs considerably better due to more favorable channel conditions compared to the urban (NLoS) scenario.

In Fig.~\ref{fig:bler_vs_w}, we study the impact of the FAS aperture size $W$ on BLER for a fixed number of ports ($N=8$). In both scenarios, the BLER decreases as $W$ increases because a larger aperture reduces the spatial correlation between the ports, thereby enhancing the effective diversity gain harvested by the FAS. However, this improvement clearly exhibits diminishing returns. As $W$ becomes sufficiently large (e.g., $W > 2\lambda$), the channels at the ports become sufficiently decorrelated, and further increasing the aperture yields only marginal performance gains. This saturation effect reveals a crucial design insight: the FAS aperture must be large enough to ensure effective diversity, but an excessively large physical footprint is unnecessary.

\begin{figure}[t]
\centering
\subfloat[Rural scenario ]{%
\includegraphics[width=0.24\textwidth]{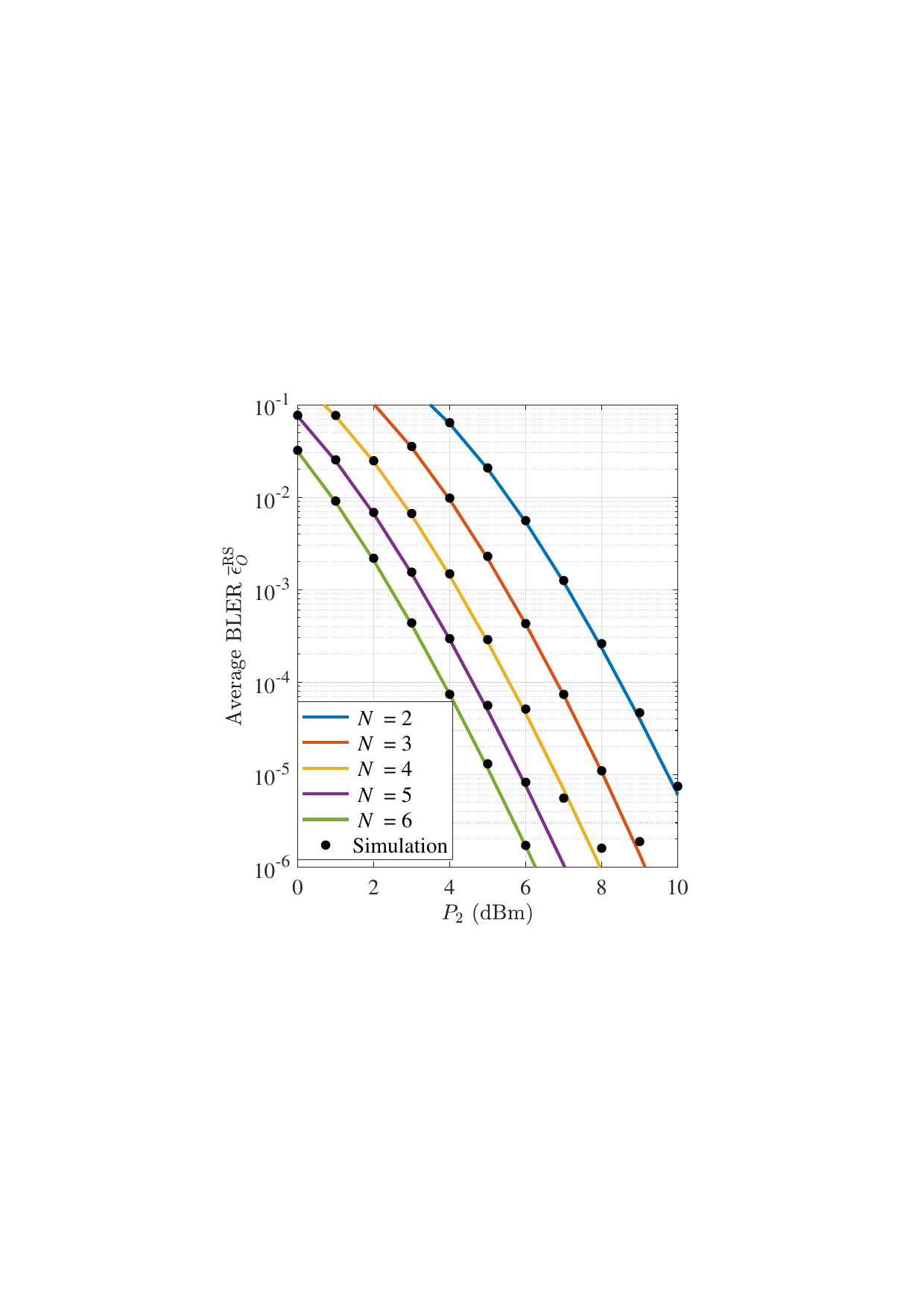}}
\hfil
\subfloat[Urban scenario ]{%
\includegraphics[width=0.24\textwidth]{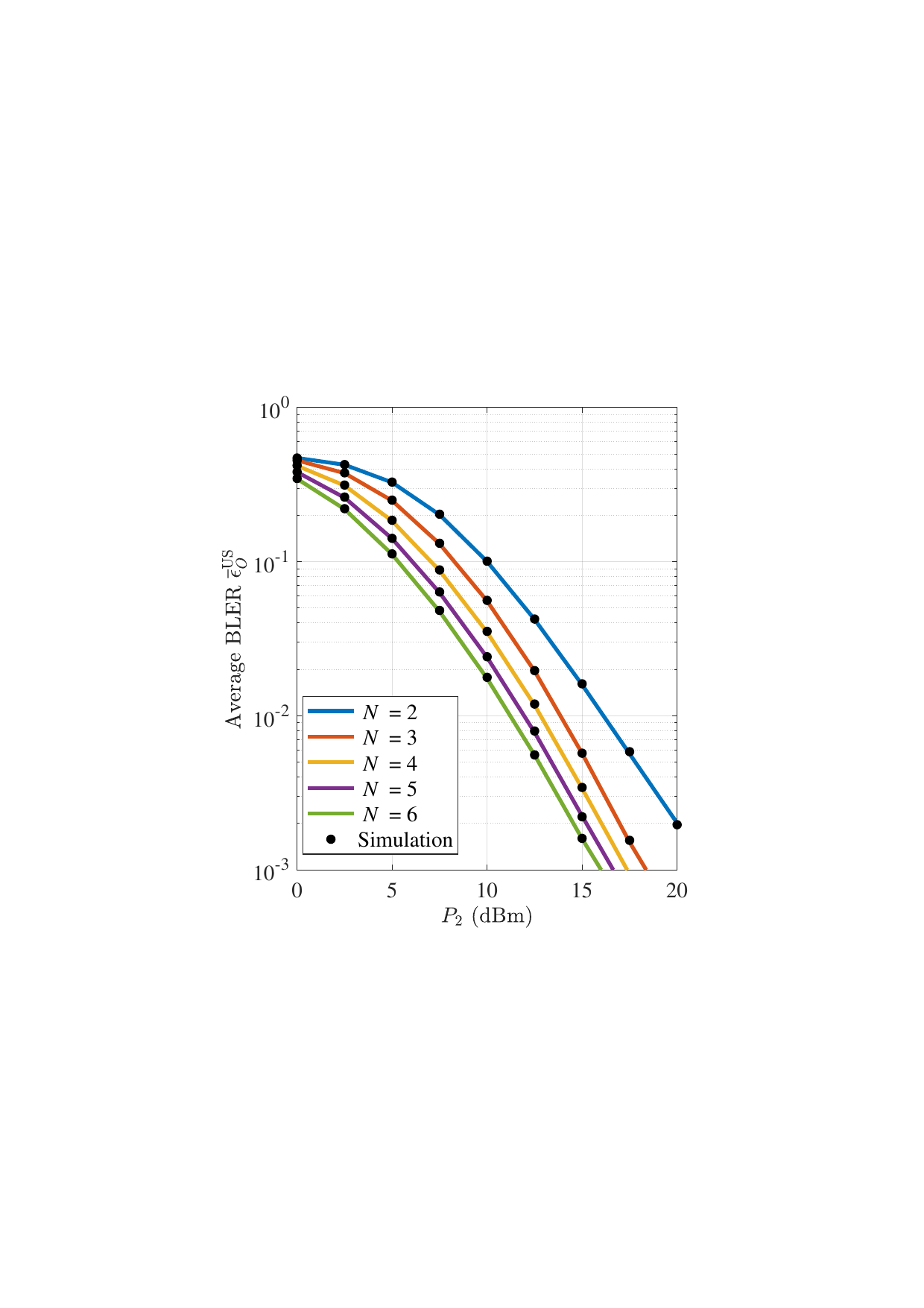}}
\caption{End-to-end BLER versus UAV transmit power $P_2$ for different numbers of FAS ports $N$. (a) Rural scenario: $L=100, m_1=5, m_2=5, P_1=20$ dBm. (b) Urban scenario: $L=100,m_{\text{LoS}}=5, m_{\text{NLoS}}=1, P_1=40$ dBm.}\label{fig:bler_vs_n}
\vspace{-3mm}
\end{figure}

Fig.~\ref{fig:bler_vs_n} provides a detailed investigation into the impact of the number of ports, $N$, on the BLER. Firstly, the results serve as a crucial verification of our theoretical derivations. The Monte Carlo simulation results are in excellent agreement with our analytical expressions, which confirms the accuracy of our analysis. As theoretically anticipated, increasing $N$ expands the port selection pool, substantially enhancing the spatial diversity gain by increasing the probability of finding a port with favorable channel conditions, thereby significantly lowering the BLER. A key observation is the parallel nature of the performance curves, particularly in the high-SNR regime, when plotted on a log-log scale. This parallelism strongly indicates that the achieved diversity order is directly proportional to $N$, confirming that each additional port actively and effectively contributes to the link's robustness.

\begin{figure}[t]
\centering
\subfloat[Rural scenario ]{%
\includegraphics[width=0.24\textwidth]{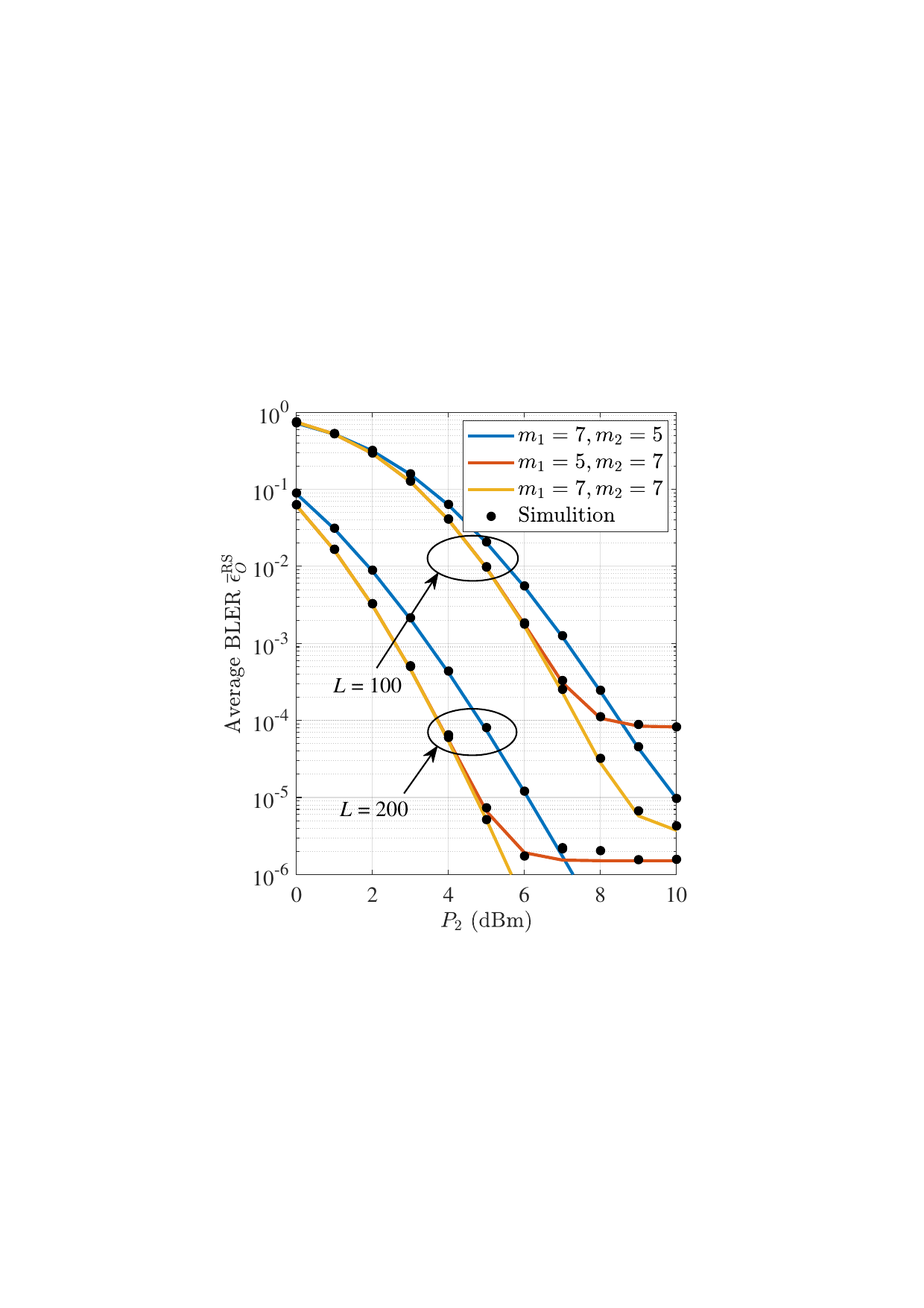}}
\hfil
\subfloat[Urban scenario ]{%
\includegraphics[width=0.24\textwidth]{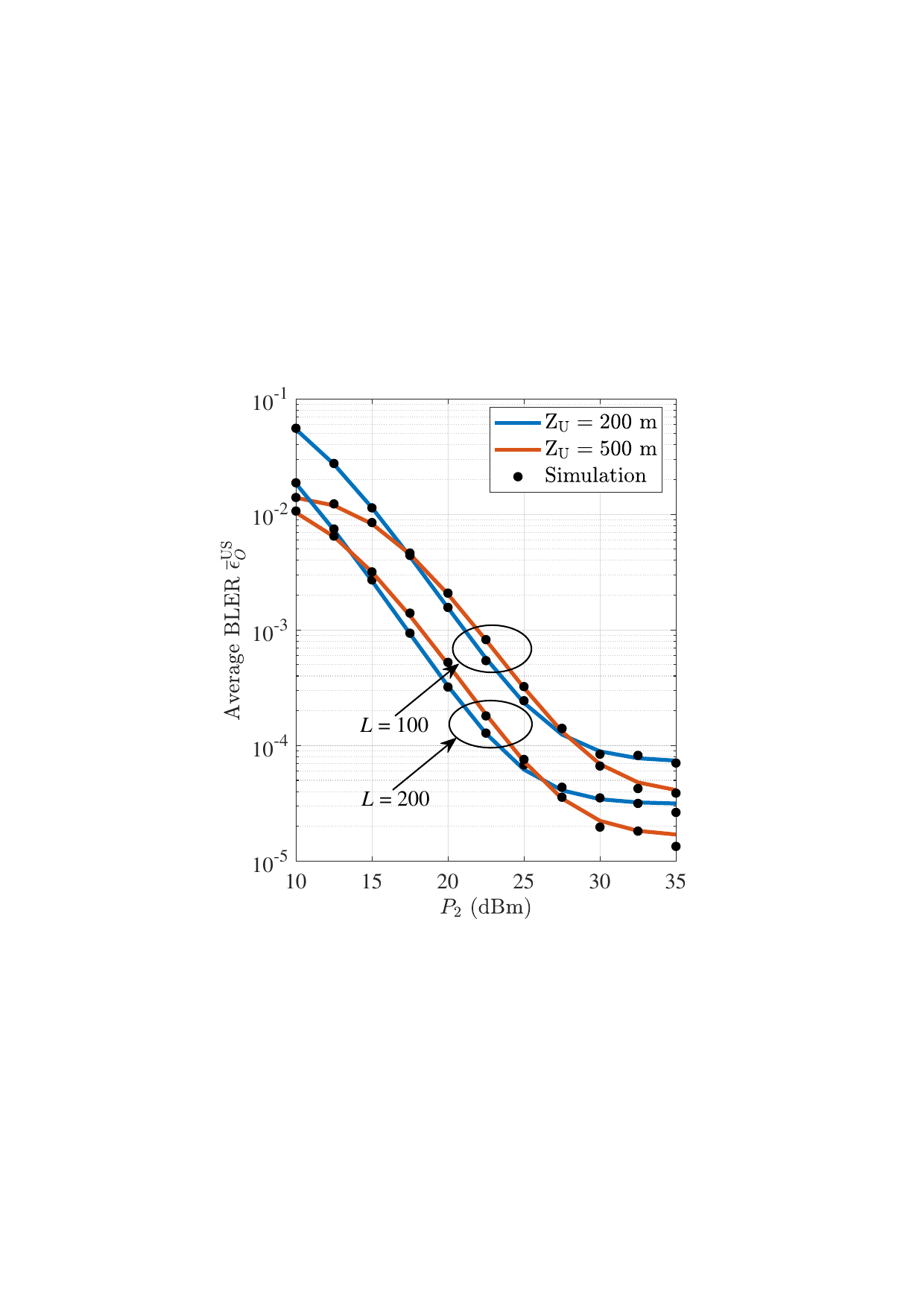}}
\caption{Impact of Nakagami-$m$ factors and blocklength $L$ in the rural scenario (a), and UAV altitude $Z_U$ and blocklength $L$ in the urban scenario (b).}\label{fig:param_impact}
\vspace{-3mm}
\end{figure}

Fig.~\ref{fig:param_impact} examines the influence of key channel and system parameters on BLER performance. Fig.~\ref{fig:param_impact}(a) illustrates the effect of fading severity ($m$-factors) and blocklength ($L$) in the rural scenario. As shown, increasing the $m$-factor of either hop (e.g., from $5$ to $7$) significantly improves performance, as a higher $m$-factor implies a less severe fading environment. Concurrently, increasing the blocklength from $L=100$ to $L=200$ provides a substantial coding gain, thus effectively reducing the BLER for any given $P_2$.

Fig.~\ref{fig:param_impact}(b) studies the impact of UAV altitude $Z_U$ and blocklength in the urban scenario. It reveals an intricate interplay: at low $P_2$, the system is blockage-limited, and the higher altitude ($Z_U=500$ m) performs better due to its higher LoS probability. But as $P_2$ increases, the system becomes path-loss-limited. With sufficient power to overcome NLoS conditions, the lower altitude ($Z_U=200$ m) becomes more advantageous due to its reduced free-space path loss, causing the performance curves to cross. This non-monotonic behavior highlights that no single altitude is optimal across all power regimes and motivates the joint altitude optimization.

\begin{figure}[t]
\centering
\subfloat[Rural scenario]{%
\includegraphics[width=0.24\textwidth]{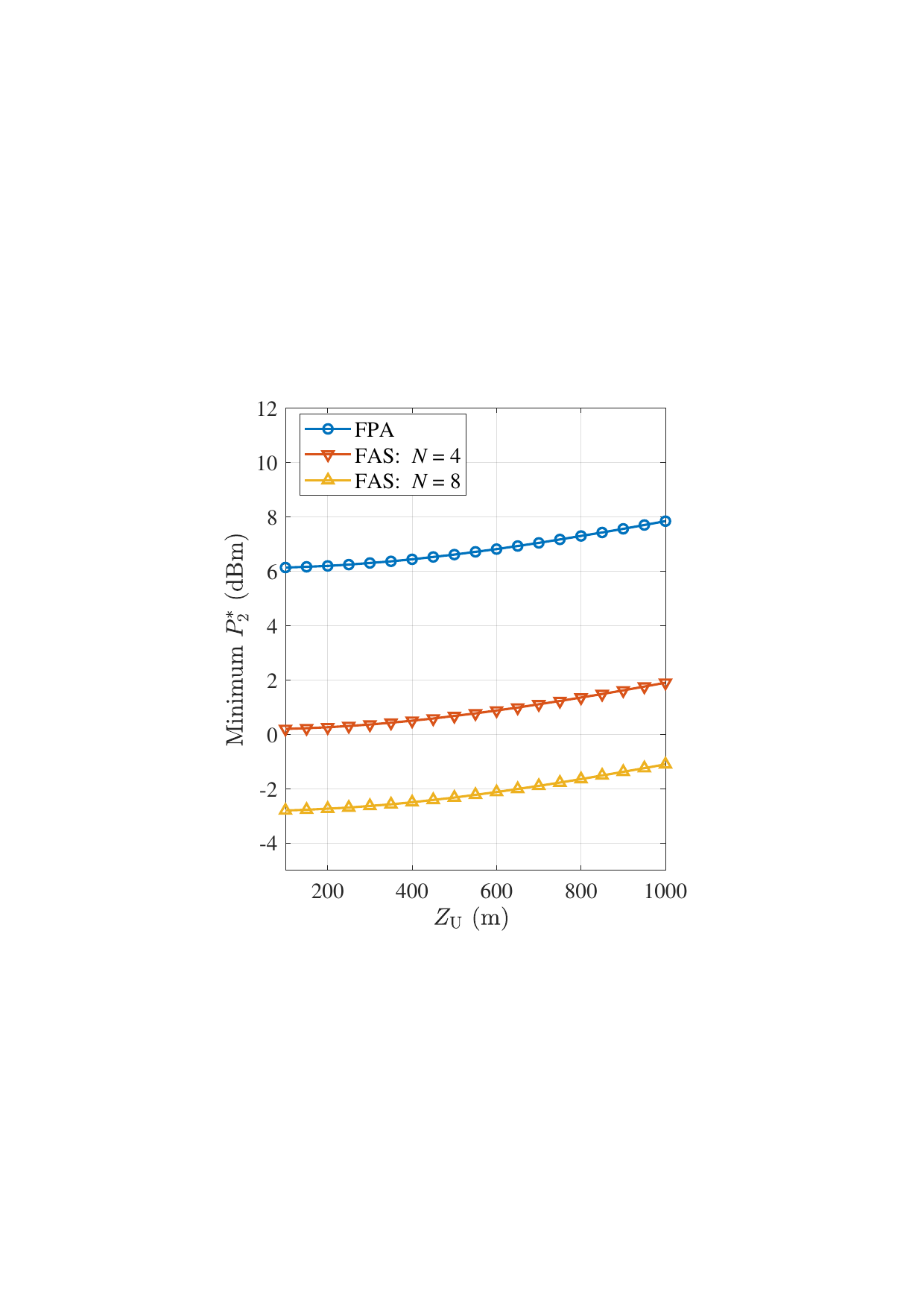}}
\hfil
\subfloat[Urban scenario]{%
\includegraphics[width=0.24\textwidth]{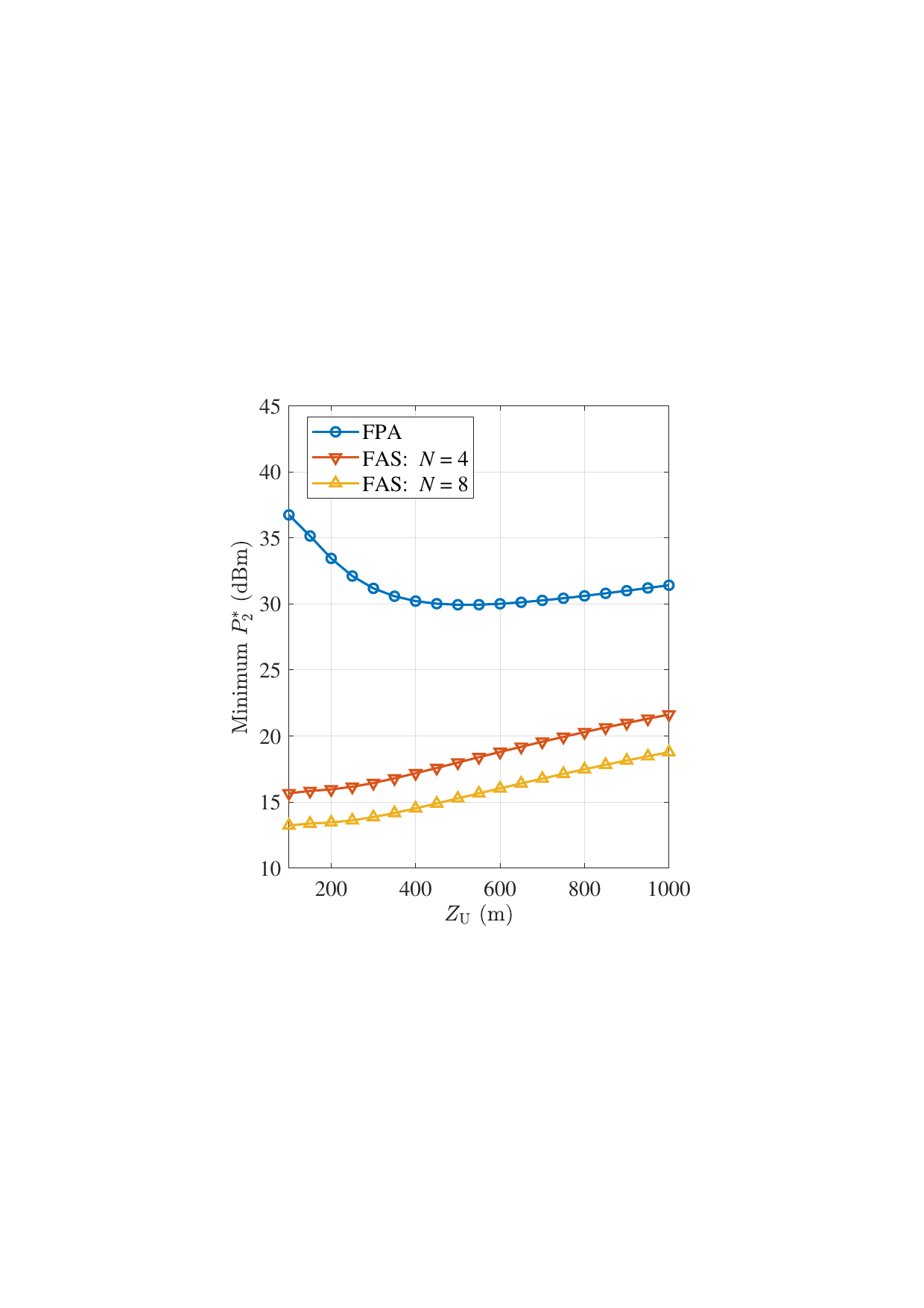}}
\caption{Minimum UAV transmit power $P_2^*$ versus UAV altitude $Z_U$ for $\epsilon_{\text{th}}=10^{-3}$ and $L=200$. Specific parameters: $P_1=15$~dBm (Rural) and $P_1=46$~dBm (Urban).}\label{fig:power_vs_alt}
\vspace{-3mm}
\end{figure}

Fig.~\ref{fig:power_vs_alt} investigates the minimum UAV transmit power $P_2^*$ required to meet $\epsilon_{\text{th}} = 10^{-3}$ at $L=200$, revealing two key findings. First, the proposed FAS ($N=4, 8$) substantially outperforms the conventional FPA ($N=1$) baseline in both scenarios, achieving over $15$ dB of power savings in the urban case. This shows the significant diversity gain harvested by FAS. Second, the optimal deployment strategy is scenario-dependent: the rural scenario (Fig. \ref{fig:power_vs_alt}(a)) is path-loss dominant, favoring the lowest altitude, whereas the urban scenario (Fig.~\ref{fig:power_vs_alt}(b)) exhibits a convex trend with an optimal altitude at $Z_U^* \approx 450$~m, balancing NLoS blockage and path loss. In both cases, the incremental gain from $N=4$ to $N=8$ is less than from $N=1$ to $N=4$, illustrating diminishing returns and validating the premise of our subsequent EE maximization.

\begin{figure}[t]
\centering
\subfloat[Rural scenario]{%
\includegraphics[width=0.24\textwidth]{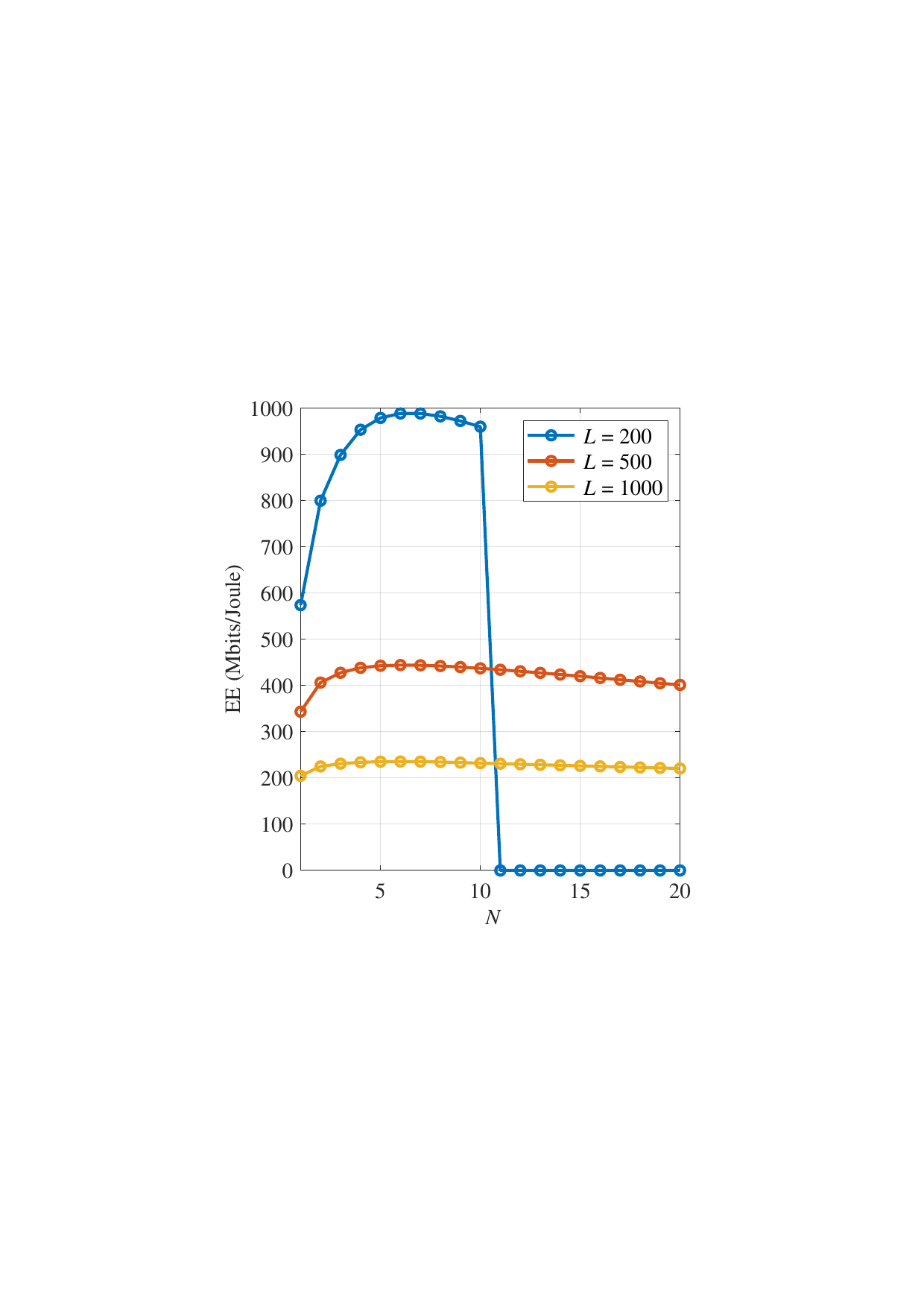}}
\hfil
\subfloat[Urban scenario]{%
\includegraphics[width=0.24\textwidth]{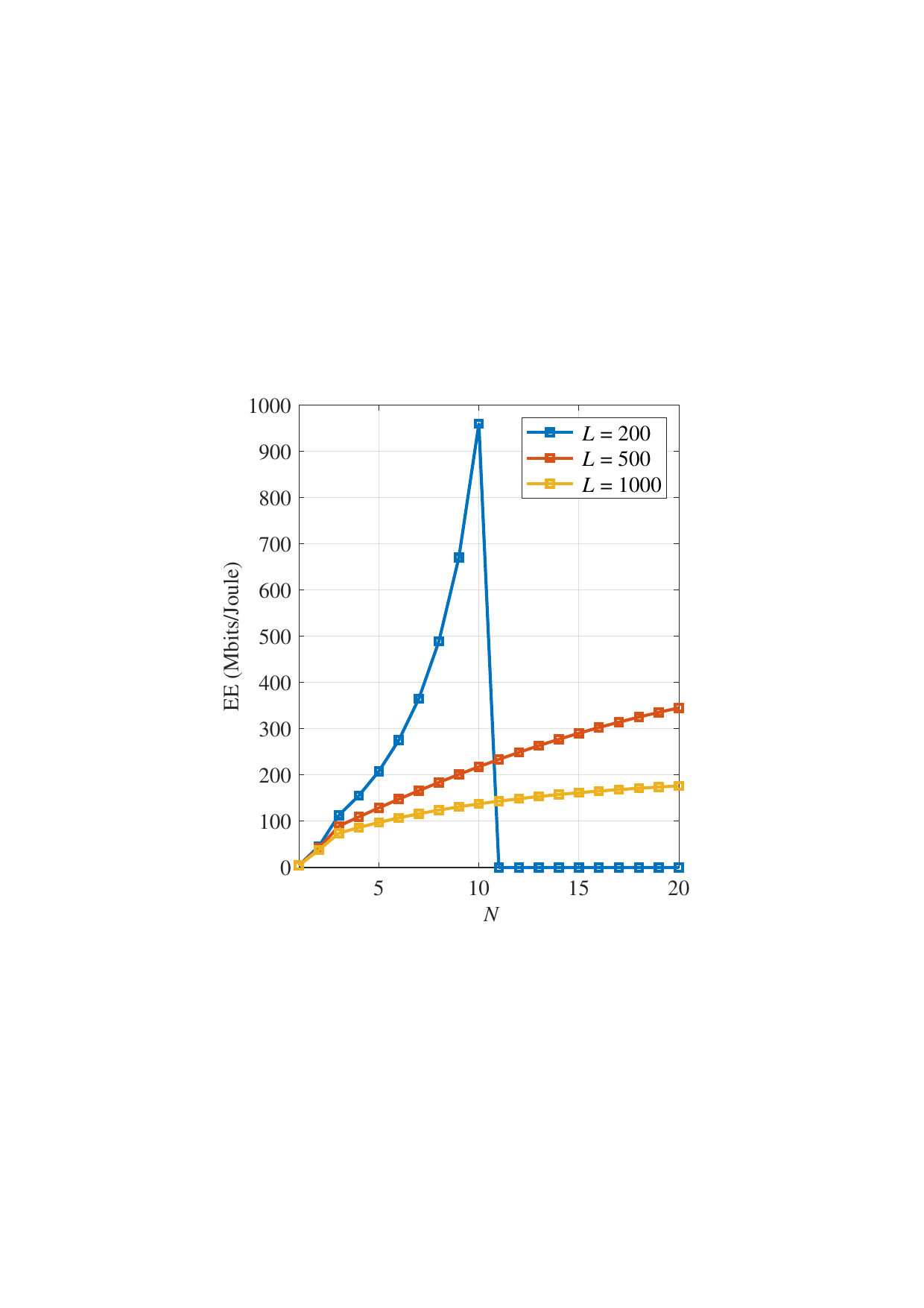}}
\caption{EE versus the number of FAS ports $N$ for different blocklengths $L$, with $\epsilon_{\text{th}} \le 10^{-3}$.}\label{fig:ee_vs_n}
\vspace{-3mm}
\end{figure}

Fig.~\ref{fig:ee_vs_n} presents the central finding of this paper. It plots the maximized EE as a function of $N$ for different blocklengths. The EE is clearly quasi-concave with respect to $N$. Initially, increasing $N$ enhances diversity, reducing the required $P_2^*$ and boosting EE. However, beyond an optimal $N^*$ (e.g., $N^* \approx 8$ for $L \ge 500$ in the rural scenario), the linear increase in time and energy overhead from port selection begins to outweigh the diminishing diversity gains, causing the EE to decrease. This demonstrates the critical importance of our realistic model. Furthermore, for the low-latency case of $L=200$, the EE abruptly drops to zero at $N=10$ due to the causality constraint $N\tau_p < L/W_{\text{band}}$, highlighting a hard physical limit on the number of ports in URLLC systems.

\begin{figure}[t]
\centering
\subfloat[Rural scenario]{%
\includegraphics[width=0.24\textwidth]{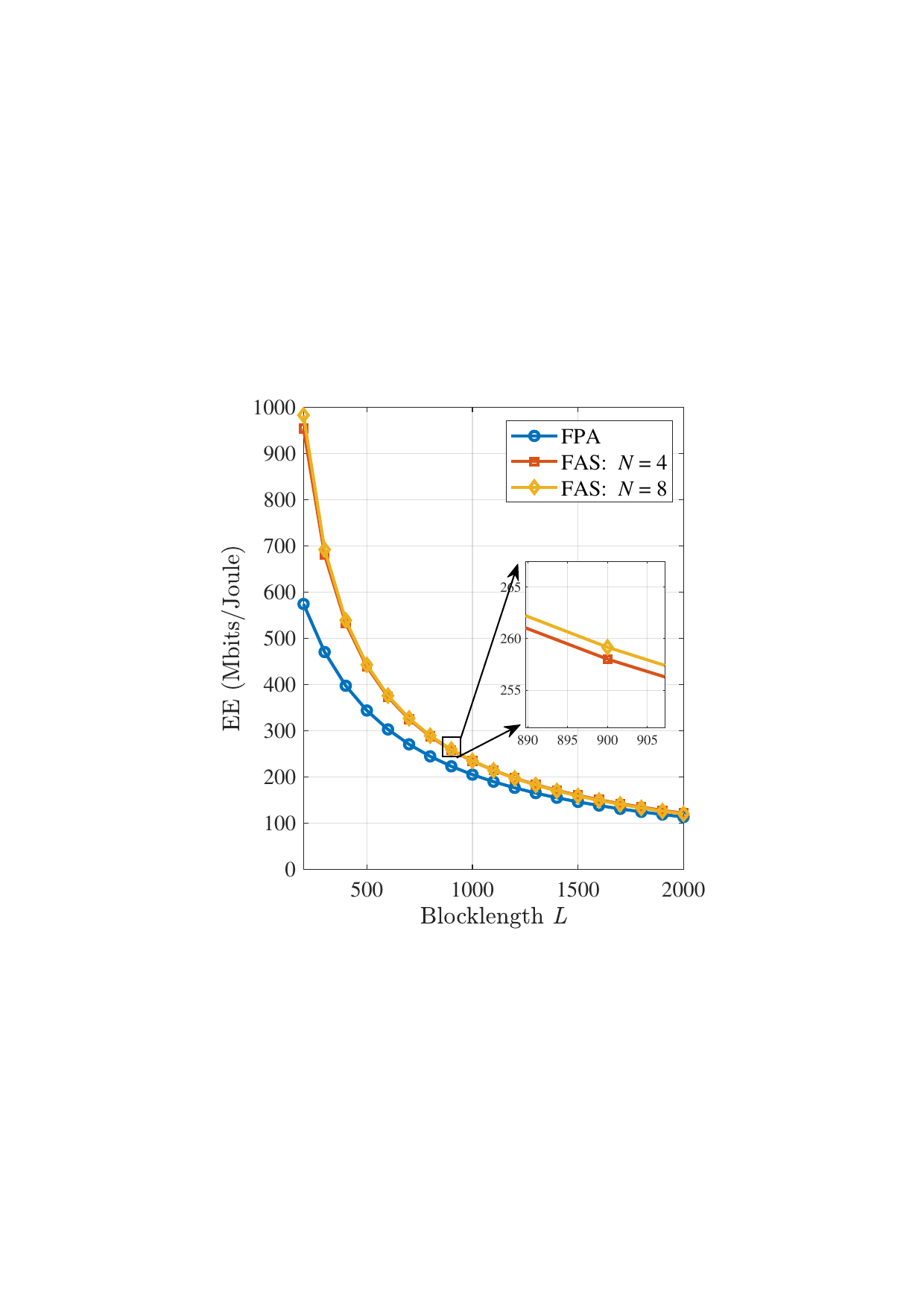}}
\hfil
\subfloat[Urban scenario]{%
\includegraphics[width=0.24\textwidth]{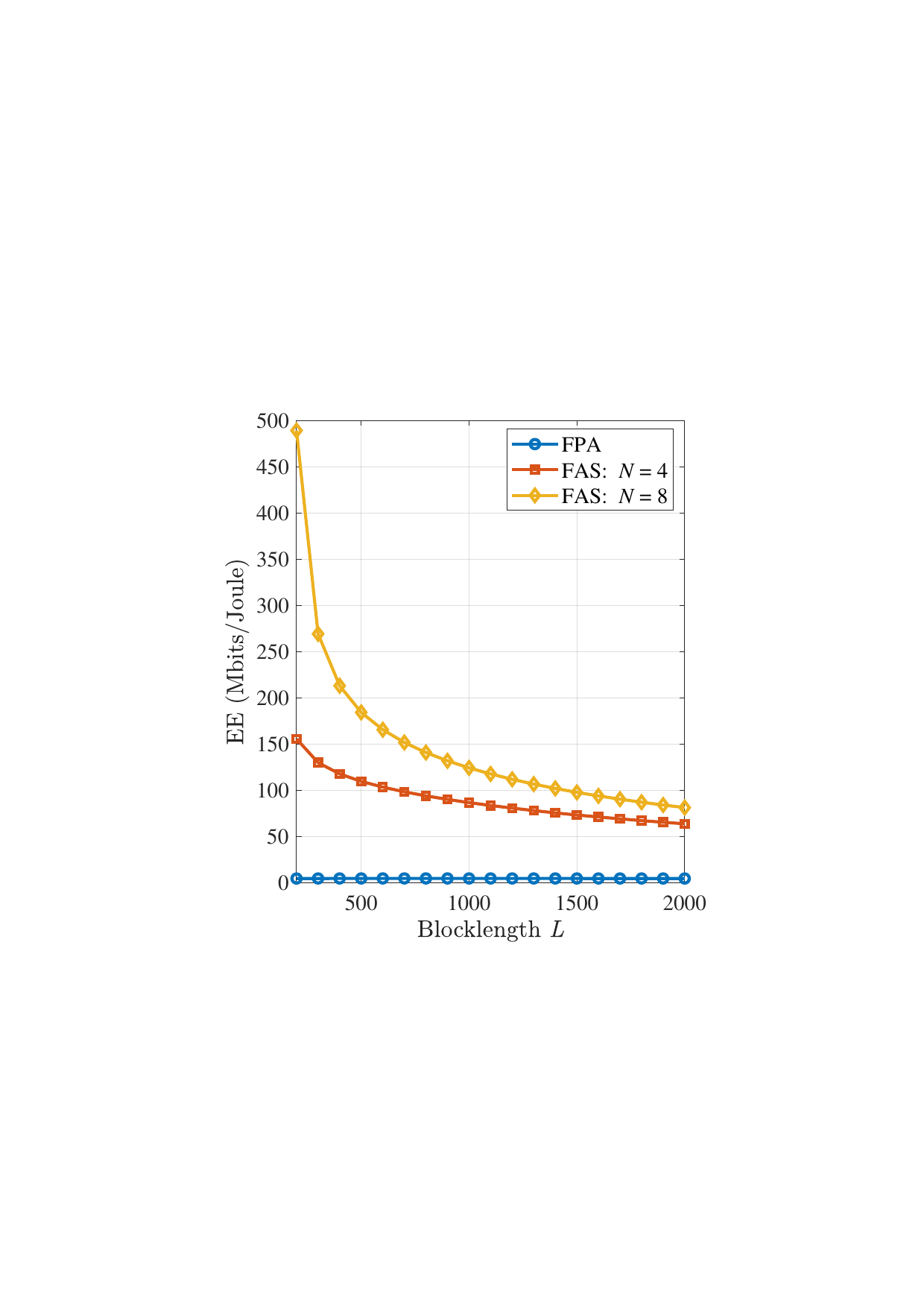}}
\caption{EE versus blocklength $L$ for different FAS configurations, with $\epsilon_{\text{th}} \le 10^{-3}$.}\label{fig:ee_vs_l}
\vspace{-3mm}
\end{figure}

Next, we use the results in Fig.~\ref{fig:ee_vs_l} to analyze the EE as a function of $L$. In both scenarios, FAS ($N>1$) consistently and substantially outperforms the FPA ($N=1$) baseline. Notably, in the urban scenario, the FPA system is barely viable, emphasizing that FAS is an enabling technology for efficient and reliable communications in such environments. The EE for all configurations generally decreases as $L$ increases. This is because the data payload $B$ is fixed, so a longer blocklength increases the total circuit energy consumption ($P_c \cdot T_{\text{block}}$) without increasing the transmitted bits, thus reducing efficiency.

\begin{figure}[t]
\centering
\subfloat[Rural scenario]{%
\includegraphics[width=0.24\textwidth]{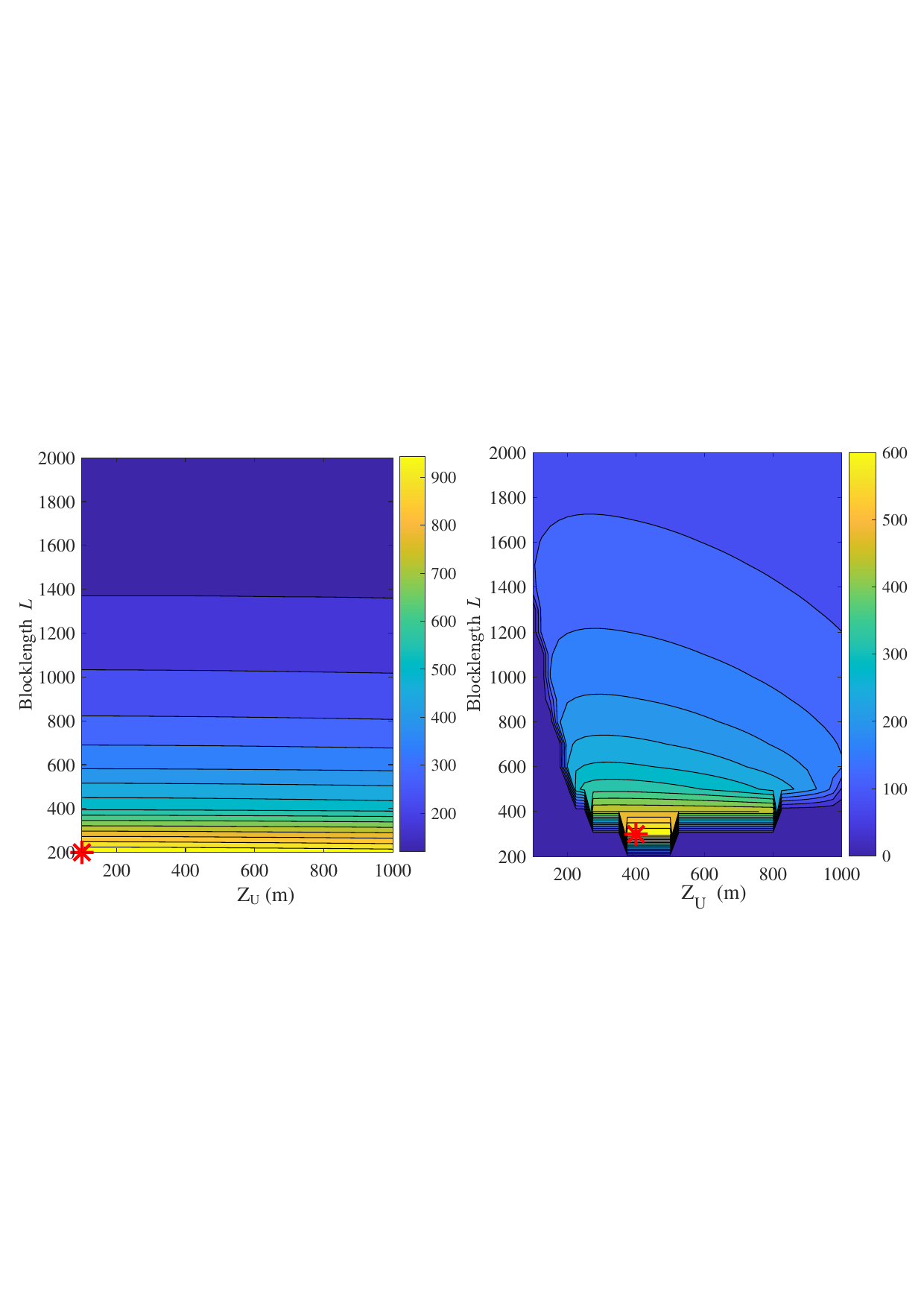}}
\hfil
\subfloat[Urban scenario ]{%
\includegraphics[width=0.24\textwidth]{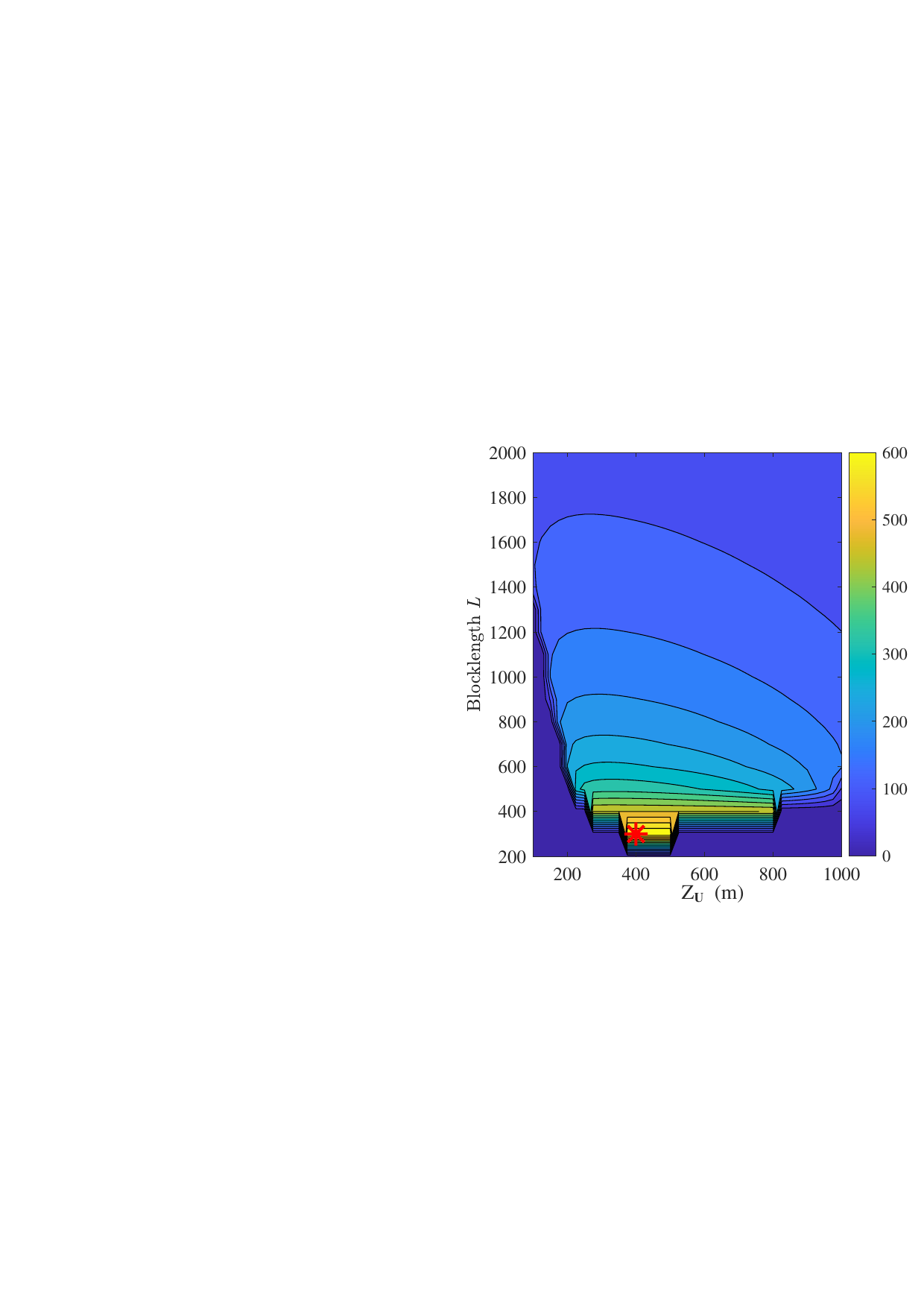}}
\caption{Contour map of the maximum achievable EE over the $(Z_U, L)$ plane, with $\epsilon_{\text{th}} \le 10^{-3}$. The red star marks the globally optimal operating point.}\label{fig:ee_contour}
\vspace{-3mm}
\end{figure}

Finally, Fig.~\ref{fig:ee_contour} provides a global view of the optimization landscape by plotting the maximum achievable EE over the $(Z_U, L)$ plane. At each coordinate, the EE value is the maximum obtained by optimizing over all $N$. The results visually confirm our findings. In the rural scenario (Fig. \ref{fig:ee_contour}(a)), the optimal region is at the lowest altitude ($Z_U^* = 100$~m) and shortest blocklength ($L^* = 200$). In contrast, the urban scenario (Fig.~\ref{fig:ee_contour}(b)) reveals a more complex landscape, with the optimal point found at a trade-off altitude of $Z_U^* \approx 400$~m and an intermediate blocklength ($L^* \approx 300$-$400$), validating the necessity of our holistic optimization framework.

\section{Conclusion}\label{sec:conclude}
In this paper, we presented a foundational framework for analyzing FAS-assisted UAV relay networks operating in the finite blocklength regime. We proposed a comprehensive analytical methodology to evaluate the end-to-end BLER, deriving novel closed-form expressions for both rural and urban environments. Our analysis, enabled by high-SNR asymptotic expressions, quantified the diversity order and identified the first-hop link as the ultimate performance bottleneck. Building upon this rigorous analytical foundation, we investigated the practical maximization of EE. By formulating a realistic model that explicitly accounts for the time and energy overhead associated with FAS port selection, we uncovered the central trade-off between diversity gain and operational cost. Our results demonstrated the existence of an optimal, finite number of ports and revealed that optimal UAV deployment strategies differ fundamentally between rural and urban scenarios. This research provides a unified theoretical and practical framework, offering valuable design guidelines for reliable and energy-efficient UAV communication systems. Future work could extend this framework to consider multiuser scenarios and study the impact of imperfect CSI.

\begin{appendices}
\section{Proof of Lemma \ref{lema1}}\label{plema1}
The PDF of a Nakagami-$m$ distributed channel amplitude $|g|$ is given by
\begin{equation}\label{nara1x}
f_{|g|}(x) = \frac{2m_1^{m_1}}{\Gamma({m_1})} x^{2m_1-1} e^{-{m_1} x^2},~x \geq 0
\end{equation}
where $\Gamma(\cdot)$ is the Gamma function and $m$ is the fading severity parameter.

Our goal is to find the CDF of the instantaneous channel power gain, $Y = |g|^2$. The CDF of $Y$ can be formulated as
\begin{equation}
F_Y(y) =\int_0^{\sqrt{y}} f_{|g|}(x) \, dx = \int_0^{\sqrt{y}} \frac{2{m_1}^{m_1}}{\Gamma({m_1})} x^{2{m_1}-1} e^{-{m_1} x^2} \, dx.
\end{equation}
Let us define $u = m_1 x^2$ and then we have
\begin{equation}\label{nara1x1}
F_Y(y) = \frac{1}{\Gamma({m_1})} \int_0^{{m_1}y} u^{{m_1}-1} e^{-u}  du.
\end{equation}
Because $\gamma(s,z) = \int_0^z t^{s-1}e^{-t}dt$, \eqref{nara1x1} can be written as
\begin{equation}
F_Y(y) = \frac{\gamma({m_1}, {m_1}y)}{\Gamma({m_1})}.
\end{equation}
For integer values of $m_1$, this expression can be simplified. Using the identity $\gamma(s,z) = (s-1)!(1 - e^{-z}\sum_{k=0}^{s-1} z^k/k!)$ and $\Gamma(s) = (s-1)!$, we obtain
\begin{align}
F_Y(y) &= 1 - \frac{\Gamma({m_1}, {m_1}y)}{\Gamma({m_1})},\label{eq:cdf_Y_gamma}\\
F_{\gamma_1^{\rm RS}}(\gamma)
&=1 - e^{-\gamma\vartheta_1} \sum\nolimits_{k=0}^{{m_1}-1} \frac{(\gamma\vartheta_1)^k}{k!},\label{eq702}
\end{align}
where $\vartheta_{1}=m_1/\bar{\gamma}_1^{\mathrm{RS}}(\theta)$.

Since $\gamma_1^{\text{RS}} = \bar{\gamma}_1^{\text{RS}} Y$, the corresponding CDF is obtained by substituting $y = x / \bar{\gamma}_1^{\text{RS}}$ into \eqref{eq702}. Letting $\vartheta_1 = m_1/\bar{\gamma}_1^{\text{RS}}(\theta)$, we get the result \eqref{eq:cdf_snr_closed}, which completes the proof.

\section{Proof of Lemma \ref{lema2} }\label{plema2}
Letting $Y_n \triangleq \lambda_n|g_n|^2$, the CDF of $Y_{\max}$ can be found as
\begin{align}
F_{|h_{\rm FAS}|^2}(y) &= \Pr(\max\{Y_1, \dots, Y_{N_{\rm eff}}\} \le y)\notag\\
&= \Pr(Y_1 \le y, Y_2 \le y, \dots, Y_{N_{\rm eff}} \le y)\notag\\
&= \prod_{n=1}^{N_{\rm eff}} \Pr(Y_n \le y) = \prod_{n=1}^{N_{\rm eff}} F_{Y_n}(y)\notag\\
&= \prod_{n=1}^{N_{\rm eff}} \frac{\gamma\left(m_2, \frac{m_2y}{\lambda_n}\right)}{\Gamma(m_2)}.\label{eq:cdf_ymax_expanded}
\end{align}
Recalling that $\gamma_2^{\rm RS} = \frac{P_2 \beta_2^{\rm RS}(\theta)}{\sigma^2} |h_{\rm FAS}|^2$,
the CDF of $\gamma_2^{\rm RS}$ is found by scaling $F_{|h_{\rm FAS}|^2}(y)$ as
\begin{equation}
\begin{aligned}
F_{\gamma_2^{\rm RS}}(x) &= \Pr(\gamma_2^{\rm RS} \le x)
= F_{|h_{\rm FAS}|^2}\left( x \left[ \frac{\sigma^2}{P_2 \beta_2^{\rm RS}(\theta)} \right] \right).
\end{aligned}\label{eq:snr_scaling_corrected} 
\end{equation}
Substituting \eqref{eq:snr_scaling_corrected} into \eqref{eq:cdf_ymax_expanded} and replacing $y$ with $\frac{x \sigma^2}{P_2 \beta_2^{\rm RS}(\theta)}$ gives
\begin{align}
F_{\gamma_2^{\rm RS}}(x)
&= \frac{1}{[\Gamma(m_2)]^{N_{\rm eff}}} \prod_{n=1}^{N_{\rm eff}} \gamma\left(m_2, \frac{m_2}{\lambda_n} \left[ \frac{x \sigma^2}{P_2 \beta_2^{\rm RS}(\theta)} \right] \right). \label{eq:final_cdf_intermediate}
\end{align}
We now use the definition of $\vartheta_2(\theta)$ from Lemma \ref{lema2}, which has $\vartheta_2(\theta) = m_2 \sum_{n=1}^{N_{\rm eff}}\lambda_n / \overline{\gamma}_2^{\rm RS}(\theta)$. By substituting the definition of $\overline{\gamma}_2^{\rm RS}(\theta)$ from (\ref{eq:gamma2}), $\vartheta_2(\theta)$ simplifies to
\begin{equation}
    \vartheta_2(\theta) = \frac{m_2 \sum_{n=1}^{N_{\rm eff}}\lambda_n}{ \left( P_2 \beta_2^{\rm RS}(\theta) \sum_{n=1}^{N_{\rm eff}}\lambda_n \right) / \sigma^2 } = \frac{m_2 \sigma^2}{P_2 \beta_2^{\rm RS}(\theta)}. \label{eq:theta_derivation}
\end{equation}
By substituting the result from \eqref{eq:theta_derivation} into \eqref{eq:final_cdf_intermediate}, we have
\begin{align}\label{eq:cdf_gamma2_final_corrected}
F_{\gamma_2^{\rm RS}}(x) &= \frac{1}{[\Gamma(m_2)]^{N_{\rm eff}}} \prod_{n=1}^{N_{\rm eff}} \gamma\left(m_2, \frac{x \vartheta_2(\theta)}{\lambda_n}\right).
\end{align}
When $m_2$ is a positive integer, we use the identity $\gamma(m_2, z) = (m_2-1)!(1-e^{-z}\sum_{k=0}^{m_2-1}\frac{z^k}{k!})$ and $\Gamma(m_2)=(m_2-1)!$. Substituting these into \eqref{eq:cdf_gamma2_final_corrected} yields the final form of the CDF as presented in Lemma \ref{lema2}. This completes the proof.

\section{Proof of Theorem \ref{theoreapd1}}\label{prooftheoreapd1}
The high SNR regime corresponds to small values of the argument $z$ in the CDF expression. For $z \to 0$, the lower incomplete gamma function $\gamma(s, z)$ can be approximated by the first term of its series expansion as
\begin{equation}
\gamma(s, z) = \sum_{k=0}^\infty \frac{(-1)^k}{k!} \frac{z^{s+k}}{s+k} \approx \frac{z^s}{s}, ~ \text{for } z \to 0.
\end{equation}
Recall that the CDF from Lemma \ref{lema2} is expressed as $F_{\gamma_2^{\rm RS}}(x) = \frac{1}{[\Gamma(m_2)]^{N_{\rm eff}}} \prod_{n=1}^{N_{\rm eff}} \gamma\left(m_2, \frac{x \vartheta_2}{\lambda_n}\right)$. By using the approximation $\gamma(m_2, z) \approx z^{m_2}/{m_2}$ to each term in the product, we get
\begin{equation}
F_{\gamma_2^{\rm RS}}(x) \approx
\frac{1}{[m_2\Gamma(m_2)]^{N_{\rm eff}}} \prod_{n=1}^{N_{\rm eff}} \left(\frac{\vartheta_2^{m_2} x^{m_2}}{\lambda_n^{m_2}}\right).
\end{equation}
Separating the terms that depend on $n$, we obtain
\begin{equation}
F_{\gamma_2^{\rm RS}}(x) \approx
\frac{(\vartheta_2^{m_2})^{N_{\rm eff}}}{[m_2\Gamma(m_2)]^{N_{\rm eff}}} \left(\prod_{n=1}^{N_{\rm eff}} \lambda_n^{-{m_2}}\right) x^{m_2{N_{\rm eff}}}.
\end{equation}
Using the identity $\Gamma(m_2+1) = m_2\Gamma(m_2)$, the expression simplifies to
\begin{equation}\label{eq:cdf_approx_final}
F_{\gamma_2^{\rm RS}}(x) \approx \left(\frac{\vartheta_2^{m_2}}{\Gamma(m_2+1)}\right)^{N_{\rm eff}}\left(\prod_{n=1}^{N_{\rm eff}} \lambda_n^{-m_2}\right) x^{m_2 {N_{\rm eff}}}.
\end{equation}
Define a constant $\mathcal{C}$ that groups all terms independent of $x$ as
\begin{equation}
\mathcal{C} \triangleq \left(\frac{\vartheta_2^{m_2}}{\Gamma(m_2+1)}\right)^{N_{\rm eff}} \left(\prod_{n=1}^{N_{\rm eff}} \lambda_n^{-m_2}\right).
\end{equation}
The approximated CDF is now simply $F_{\gamma_2^{\rm RS}}(x) \approx \mathcal{C} x^{m_2{N_{\rm eff}}}$. The average BLER is given by an integral involving the CDF. Based on the approximation, we have
\begin{equation}
\bar{\epsilon}^{\rm RS}_2 = \chi \int_{\rho_L}^{\rho_H} F_{\gamma_2^{\rm RS}}(x) dx \approx \chi \int_{\rho_L}^{\rho_H} \mathcal{C} x^{m_2 {N_{\rm eff}}} dx.
\end{equation}
After that, we perform the integration as
\begin{equation}
\bar{\epsilon}^{\rm RS}_2 \approx
\frac{\chi \mathcal{C}}{m_2 {N_{\rm eff}} + 1} \left( \rho_H^{m_2 {N_{\rm eff}} + 1} - \rho_L^{m_2 {N_{\rm eff}} + 1} \right).
\end{equation}
Substituting the full expression for $\mathcal{C}$ back gives the result.
\end{appendices}

\end{document}